\newcommand{\mynewtheorem}[2]{
  \newaliascnt{#1}{dummy}
  \newtheorem{#1}[#1]{#2}
  \aliascntresetthe{#1}
  \expandafter\def\csname #1autorefname\endcsname{#2}
}
\newtheorem{lemma}{Lemma}
\newtheorem{claim}{Claim}
\newtheorem{proposition}{Proposition}
\newtheorem{corollary}{Corollary}
\newtheorem{theorem}{Theorem}
\newcommand{\tw}{\mathsf{tw}}
\definecolor{Black}{rgb}{0,0, 0}
\definecolor{mygreen}{rgb}{0, .6, 0}
\definecolor{Blue}{rgb}{0, 0 ,1}
\definecolor{Red}{rgb}{1, 0 ,0}
\definecolor{Other}{rgb}{.1, .6,.7}
\definecolor{Otherother}{rgb}{.9, .3,.4}
\definecolor{Brown}{rgb}{0.5, 0.3, 0.3}
\definecolor{Magenta}{rgb}{0.5, 0, 1}
\definecolor{White}{rgb}{1, 1, 1}
\definecolor{Yellow}{rgb}{.55,.55,0}
\newcommand{\poly}{{\sf poly}}
\newcommand{\bound}[1]{{\bf #1}}
\newcommand{\rep}{{\sf rep}}
\newcommand{\desc}{{\sf desc}}
\date{}
\begin{document}

\title{Data-compression for Parametrized Counting\\ Problems on Sparse graphs}

\author{Eun Jung Kim\thanks{Universit\'{e} Paris-Dauphine, PSL Research University, CNRS/LAMSADE, 75016, Paris, France.}~\thanks{Supported by project ESIGMA (ANR-17-CE40-0028).} \and  Maria Serna\thanks{Computer Science Department \& BGSMath, Universitat Politècnica de Catalunya, Barcelona, Spain.}~\thanks{Partially funded by MINECO and FEDER funds under grants TIN2017-86727-C2-1-R (GRAMM) and MDM-2014-044 (BGSMath), and by AGAUR grant 2017SGR-786 (ALBCOM).} \and Dimitrios M. Thilikos\thanks{AlGCo project-team, LIRMM, CNRS, Université de Montpellier, Montpellier, France.}~\thanks{Department of Mathematics, National and Kapodistrian University of Athens, Greece.}~\thanks{Supported by projects DEMOGRAPH (ANR-16-CE40-0028) and ESIGMA (ANR-17-CE40-0028).}}
\maketitle

\begin{abstract}
\noindent We study the concept of \emph{compactor}, which may be seen as a counting-analogue of kernelization in counting parameterized complexity. 
For a function $F:\Sigma^*\to \Bbb{N}$ and a parameterization $\kappa: \Sigma^*\to \Bbb{N}$, a compactor $({\sf P},{\sf M})$ 
consists of a polynomial-time computable function ${\sf P}$, called \emph{condenser},  and a computable function ${\sf M}$, called  \emph{extractor}, 
such that $F={\sf M}\circ {\sf P}$, and the condensing  ${\sf P}(x)$ of $x$ has length at most $s(\kappa(x))$, for any input $x\in \Sigma^*.$ 
If $s$ is a polynomial function, then the compactor is said to be of polynomial-size.
Although the study on counting-analogue of  kernelization is not unprecedented, 
it has received  little attention so far. 
We study a family of vertex-certified counting problems on graphs that are MSOL-expressible; that is, 
for an MSOL-formula $\phi$ with one free set variable to be interpreted as a vertex subset, we want to 
count all $A\subseteq V(G)$ 
where $|A|=k$ and $(G,A)\models \phi.$ 
In this paper, we prove that every vertex-certified counting problems on graphs that is  \emph{MSOL-expressible} and \emph{treewidth modulable}, when parameterized by $k$, 
admits a polynomial-size compactor on $H$-topological-minor-free graphs
with condensing time 
$O(k^2n^2)$ 
and decoding time  $2^{O(k)}.$ 
This implies the existence of an {\sf FPT}-algorithm  of running time $O(n^2k^2)+2^{O(k)}.$
All aforementioned complexities  are 
under the Uniform Cost Measure (UCM) model where 
numbers can be stored in constant space and arithmetic operations can be done in constant time.
 \end{abstract}

\section{Introduction}

A large part of research on parameterized algorithms has been focused 
on  algorithmic techniques for parametrizations 
of decision problems. However, relatively less effort has been invested 
for solving parameterized counting problems. In this paper, we provide  a general data-reduction concept for counting problems, leading to a formal definition of the  notion of a {\sl compactor}.
Our main result is an algorithmic meta-theorem for the existence of a polynomial size compactor, that is applicable to a 
wide family of problems of graphs. 
%

\subsection{General context}\label{pretends}
\paragraph{Algorithmic meta-theorems.}
Parameterized complexity has been proposed as a multi-variable framework
for coping with the inherent complexity of computational problems. Nowadays, 
it is a mature discipline of modern Theoretical Computer Science and has offered 
a wealth of algorithmic techniques and solutions (see~\cite{CyganFKLMPPS15Para,DowneyF13fund,FlumGrohebook,Niedermeier06inv} for related textbooks).
In some cases, in-depth investigations 
on the common characteristics of parameterized problems
gave rise to algorithmic meta-theorems. Such theorems  typically provide 
conditions, logical and/or combinatorial, for a problem to admit 
a parameterized algorithm~\cite{GroheK11meth,Grohe08logi,Kreutzer08algo,Seese91thes}. Important algorithmic meta-theorems
concern model-checking for Monadic Second Order Logic (MSOL) \cite{Courcelle90them,BoriePT92auto,ArnborgLS91easy,Seese96line} on bounded treewidth graphs and model checking for First Order Logic (FOL) on  certain classes of sparse graphs \cite{FlumG01fixe,FrickG99deci,DawarGK07loca,DvorakKT13firs,DvorakKR10deci,GroheKS2014deci}.

In some cases,
such theorems have a counterpart on {\sl counting}  parameterized problems. Here  the target is to prove 
that counting {\em how many solutions} exist
for a problem is fixed parameter tractable, under some  parameterization of it.
Related meta-algorithmic results   concern counting analogues of 
Courcelle's theorem, proved in~\cite{CourcelleMR00fixe}, stating that counting 
%
%
problems definable in MSOL are fixed-parameter tractable when parameterized by the tree-width of the input graph. Also similar results 
for certain fragments of MSOL  hold when 
parameterized  by the  rank-width of the input graph~\cite{CourcelleMR00fixe}. Moreover, 
it was shown in~\cite{Frick04gene} that 
counting problems definable in first-order logic are fixed-parameter tractable on locally tree-decomposable graphs (e.g. for planar graphs and bounded genus graphs).
%

\paragraph{Kernelization and data-reduction.}
A well-studied concept in parameterized complexity is kernelization. We say that a parameterized problem admits a polynomial kernel 
if there is an algorithm -- called {\em kernelization algorithm} --
that can transform, in polynomial time, every input instance of the problem to an equivalent one,
whose size is bounded by a 
 function of 
the parameter. When this function is polynomial then  we have a {\em polynomial kernel}.
A polynomial kernel permits the drastic 
data-reduction of the problem instances to  equivalent 
``miniatures'' whose size is {\em independent} from the bulk of the input size and is polynomial on the parameter.
That way, a  polynomial kernel, 
provides a preprocessing of  computationally hard problems 
that enables the application of 
exact algorithmic approaches (however still super-polynomial)  
on significantly reduced instances~\cite{LokshtanovMS12kern}. 

\paragraph{Meta-algorithmic results for kernelization.}
Apart from the numerous  advances on the design of polynomial kernels for particular problems, algorithmic 
meta-theorems  appeared also for kernelization. 
The first  result of this type appeared in~\cite{BodlaenderFLPST16meta}, where
it  was proved that certain families of  problems on graphs
admit polynomial kernels on bounded genus graphs. The logic-condition of~\cite{BodlaenderFLPST16meta} is  CMSOL-expressibility  or, additionally, the Finite Integer Index (FII)  property (see~\cite{AbrahamsonF93,BodlaendervA01a,Fluiter97}). Moreover, the meta-algorithmic results of~\cite{BodlaenderFLPST16meta} require additional combinatorial 
properties for the problems in question. The results in~\cite{BodlaenderFLPST16meta}
where  extended in~\cite{FominLST10bidi} (see also~\cite{FominLST16bidi})
where the combinatorial condition for the problem
was related to bidimensionality, while the applicability 
of the results was extended in  minor-closed graph classes. Finally, further extensions appeared in~\cite{KimLPRRSS16line} where, under the bounded treewidth-modulability  property (see~Subsection~\ref{wasteful}),  some of the results in~\cite{FominLST10bidi,BodlaenderFLPST16meta} could 
be applied to more graph classes, in particular those excluding some fixed graph as a topological minor.

\paragraph{Data reduction for counting problems.}
Unfortunately, not much has been done so far in the direction of data-reduction for parameterized counting problems. The most comprehensive work in this direction was done by Marc Thurley~\cite{Thurley07kern}
(see also~\cite{ThurleyDipThes})
who proposed the first formal definition of a kernelization analogue  for parameterized problems called {\sl  counting kernelization}.
In~\cite{Thurley07kern}  Thurley investigated up to which extent 
classic kernelization techniques such as {\sl Buss’ Kernelization}  and {\sl crown decomposition} may lead to counting counterparts of  kernelization. In this direction, he 
provided counting kernelizations  for a series of parameterized counting problems such as and 
$p$-\#{\sc VertexCover}, $p$-{\sc card}-\#{\sc Hitting Set} and $p$-\#{\sc Unique Hitting Set}. 

\paragraph{Compactor enumeration.}
Another framework
for data-reduction on parameterized counting problems
is provided by the notion of a {\em compactor}. In a precursory level,  
it  appeared for the first time in~\cite{DiazST08effi}.
The rough idea in~\cite{DiazST08effi}
was to transform the input of a parameterized counting problem 
to a structure, 
called {\sl the compactor}, whose size is (polynomially) bounded by the parameter
and  such that the enumeration of certain family of objects  (referred as {\em compactor 
enumeration} in~\cite{DiazST08effi}) in the compactor is able to derive the 
number of solutions for the initial instance.  This technique 
was introduced in~\cite{DiazST08effi} for  
counting  restrictive list $H$-colorings and, later in~\cite{NishimuraRT05}, 
for counting generalized coverings and matchings. However none of~\cite{DiazST08effi,NishimuraRT05} provided a general  formal definition of a compactor, while, in our opinion, the work of Thurley provides a legitimate formalization of compactor enumeration.

In this paper, we define formally the concept of a compactor 
for parameterizations of  function problems (that naturally  include counting problems) that is not based on enumeration.
As a first step, we observe that for parameterized function  problems, the existence of a compactor 
is equivalent to the existence of an {\sf FPT}-algorithm, a fact that is also the case for classic kernels 
on decision problems and for counting kernels in~\cite{Thurley07kern}. 
%
%

Under the above formal framework,
we prove an  algorithmic meta-theorem  on the  existence 
of polynomial compactors for a general family of graph problems. In the next subsection,
we define the compactor concept and we present  the related meta-algorithmic results.

%
%

%

\subsection{Our results}
\label{wasteful}

\paragraph{Counting problems and parameterizations.}  
First of all notice that, for a counting problem, it is not possible 
to have a kernelization in the classic sense, that is to produce an 
reduced instance, bounded by a function of $k$,
that is counting-equivalent in the sense that the number of  
solutions in the reduced instance will provide the number  of solutions in the original one.
For this reason we need a more refined notion of data compression
where we transform the input instance to ``structure'', whose 
size is bounded by a function of $k.$ 
 This structure contains enough information (combinatorial and arithmetical)
so as to permit the recovering of the number of the solutions in the initial instance.
We next formalize this idea to the concept of a compactor.

 Let $\Bbb{N}$ be all  non-negative integers and by ${\sf poly}$ the set of all polynomials.
Let $\Sigma$ be a fixed alphabet.
A {\em  parameterized function problem} is a pair $(F,\kappa)$ where $F,\kappa:\Sigma^*\to\Bbb{N}.$
An {\sf FPT}-algorithm for  $(F,\kappa)$ is one that, given  $x\in\Sigma^*$,  
outputs $F(x)$ in $f(\kappa(x))\cdot \poly(|x|)$ steps.
When evaluating the running time, we use the standard Uniform Cost Measure (UCM) model where all 
basic arithmetic computations are carried out in constant time. We  also disregard the size of the numbers
that are produced during the execution of the algorithm.

 \paragraph{Compactors.}
Let $(F,\kappa)$ be a parameterized function problem. A {\em  compactor}
for $(F,\kappa)$ is a pair $(P,M)$  where 
\begin{itemize}
\item $P: \Sigma^*\to \Sigma^*$ is a polynomially computable function, called an {\em condenser}, 
\item $M: \Sigma^*\to \Bbb{N}$ is a computable function, called a {\em extractor}, 
\item $F=M\circ P$, i.e., $\forall x\in\Sigma^*$, $F(x)=(M\circ P)(x)$, and 
\item  there is a recursive function $s:\Bbb{N}\to\Bbb{N}$ where  $\forall x\in \Sigma^{*}\   |P(x)|\leq s(\kappa(x)).$  
\end{itemize}   
We call the function $s$  {\em size} of the compactor $(P,M)$
and, if $s\in{\sf poly}$, we say that $(P,M)$ is a {\em polynomial-size compactor} for $(F,\kappa).$ 
We  call the running time of the algorithm computing  $P$, measured as a function of $|x|$, {\em condensing time} of $(P,M).$ We also call the running time of the algorithm computing $M$,  measured as a function of $\kappa(x)$,  {\em decoding time} of $(P,M).$ We can readily observe the following.

\begin{lemma}
\label{atheists}
A parameterized function problem has an {\sf FPT}-algorithm if and only if 
there is a compactor for it.
\end{lemma}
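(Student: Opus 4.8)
The plan is to prove the equivalence in both directions, treating the easy direction (compactor $\Rightarrow$ {\sf FPT}) first, and then the slightly more delicate direction ({\sf FPT} $\Rightarrow$ compactor).

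\medskip
\noindent\textbf{($\Leftarrow$) From a compactor to an {\sf FPT}-algorithm.} Suppose $(P,M)$ is a compactor for $(F,\kappa)$ with size $s$. Given $x\in\Sigma^*$, the algorithm first computes $P(x)$; since $P$ is polynomially computable, this takes $\poly(|x|)$ steps, and moreover $|P(x)|\le s(\kappa(x))$. It then computes $M(P(x))$; since $M$ is a computable function, there is some recursive function $g$ bounding its running time on inputs of length $\ell$ by $g(\ell)$. Hence the total running time is $\poly(|x|)+g(s(\kappa(x)))$. Composing the recursive functions $g$ and $s$ gives a recursive function $f(k):=g(s(k))$, so the running time is $\poly(|x|)+f(\kappa(x))$, which is of the form $f(\kappa(x))\cdot\poly(|x|)$. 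By $F=M\circ P$ the output is $F(x)$, so this is an {\sf FPT}-algorithm.

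\medskip
\noindent\textbf{($\Rightarrow$) From an {\sf FPT}-algorithm to a compactor.} Suppose $\mathcal{A}$ is an {\sf FPT}-algorithm for $(F,\kappa)$ running in time $f(\kappa(x))\cdot\poly(|x|)$ for some recursive $f$ and some polynomial $p$. The standard trick is a ``win/win'' on the input size: fix a threshold and compare $|x|$ against a function of $\kappa(x)$. Concretely, define the condenser $P$ as follows. On input $x$, compute $\kappa(x)$ (note $\kappa$ is itself computable; if one is worried about its running time, one can instead run $\mathcal{A}$ with a step-counter, but the cleanest formulation assumes $\kappa\in\poly$ or simply absorbs this into the definition as is standard). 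If $|x|\le f(\kappa(x))$ — i.e.\ the input is already ``small'' relative to the parameter — set $P(x):=x$. Otherwise $|x|>f(\kappa(x))$, so $f(\kappa(x))\cdot p(|x|)\le |x|\cdot p(|x|)=\poly(|x|)$; hence we can afford to run $\mathcal{A}$ on $x$ in polynomial time, obtain $F(x)$, and set $P(x)$ to be a trivial fixed string (say a special symbol) together with the encoding of the number $F(x)$. In both cases $P$ runs in polynomial time. Define the extractor $M$ by: on input $y$, if $y$ is a ``tagged'' string of the second kind, output the number it encodes; otherwise run $\mathcal{A}$ on $y$ and output the result. Then $M$ is computable, and $F=M\circ P$ by construction. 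For the size bound: in the first case $|P(x)|=|x|\le f(\kappa(x))$; in the second case $|P(x)|$ is the length of a tag plus the length of the encoding of $F(x)$, and $F(x)$ is a number output by $\mathcal{A}$, which under the UCM model we treat as being of bounded size — so $|P(x)|$ is bounded by a constant, and in any case by a recursive function of $\kappa(x)$. Setting $s(k):=\max\{f(k),\,c\}$ for the appropriate constant $c$ yields a recursive size function, so $(P,M)$ is a compactor for $(F,\kappa)$.

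\medskip
\noindent\textbf{Main obstacle.} The only subtle point is the treatment of the number $F(x)$ in the ``large input'' case: in the bit model its encoding has length $O(\log F(x))$, which need not be bounded by a function of $\kappa(x)$ alone, so strictly speaking $P(x)$ would not be short. This is precisely why the statement is made under the UCM model, where numbers occupy constant space; under that convention the bound is immediate. An alternative, model-independent fix is to observe that $\log F(x)$ is polynomially bounded in $|x|$ for any reasonable $F$ and to relax the size notion accordingly, but within the paper's conventions the UCM reading is the intended one and makes the argument clean.
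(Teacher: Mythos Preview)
Your proof is correct and follows essentially the same approach as the paper: the compactor-to-{\sf FPT} direction is identical, and for the {\sf FPT}-to-compactor direction both use the standard win/win on $|x|$ versus $f(\kappa(x))$, running $\mathcal{A}$ outright when $|x|>f(\kappa(x))$ and passing $x$ through otherwise. Your use of an explicit tag to distinguish the two cases in the image of $P$ is in fact slightly cleaner than the paper's version (which implicitly relies on $M$ knowing which branch was taken), and your discussion of the UCM subtlety is a welcome clarification that the paper leaves implicit.
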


\begin{proof}[Proof]
Let $(F,\kappa)$ be a parameterized function problem. We assume the UMC model. 
Suppose that an algorithm ${\sf A}$ computes $F(x)$ in time $f(\kappa(x))|x|^{O(1)}$ for any input $x\in \Sigma^*.$ 
Then let ${\sf P}$ be a function defined as 
\begin{equation*}
{\sf P}(x)=
\begin{cases}
F(x) & \text{if }|x| > f(\kappa(x))\\
x		& \text{otherwise. }
\end{cases}
\end{equation*}
Cleary, ${\sf P}(x)$ can be computed in polynomial time since if $|x| > f(\kappa(x))$, then one can compute $F(x)$ by ${\sf A}$ 
in time $f(\kappa(x))|x|^{O(1)}=|x|^{O(1)}.$ Furthermore, ${\sf P}(x)$ has length at most $f(k).$ 
For the extractor ${\sf M}$, we define the image of $z={\sf P}(x)$ under ${\sf M}$ as 
\begin{equation*}
{\sf M}(z)=
\begin{cases}
z & \text{if }|x| > f(\kappa(x))\\
F(z)		& \text{otherwise. }
\end{cases}
\end{equation*}
Note that the function ${\sf M}$ can be computed; in particular $F(z)$ can be computed by ${\sf A}.$ 
Clearly, we have $F={\sf M}\circ {\sf P}$ and $({\sf P}, {\sf M})$ is a compactor for $(F,\kappa).$

Conversely, let $({\sf P},{\sf M})$ be a compactor for $(F,\kappa)$ and a function $s$ be a size of the compactor. 
For any input $x\in \Sigma^*$, we can run an algorithm in time $O(|x|^{O(1)})$ to compute ${\sf P}(x)$ 
and an algorithm in time $g(|{\sf P}(x)|)$ to compute ${\sf M}({\sf P}(x))=F(x).$ 
As $|{\sf P}(x)|\leq s(\kappa(x))$ and the function $g$ can be assumed to be non-decreasing, 
this computes $F(x)$ in time $O(|x|^{O(1)}+g\circ s \circ \kappa(x)).$
\end{proof}

Up to our best knowledge, the notion of compactor as formalized in this paper is new. 
As discussed in Subsection~\ref{pretends}, similar notions have been proposed such as counting kernelization~\cite{Thurley07kern} and 
compactor enumeration~\cite{DiazST08effi}. 
In both counting kernelization and compact enumeration, a mapping from the set of all certificates to certain objects in the new instance is required. 
While this approach comply more with the idea of classic kernelization, it seems to be more restrictive. 
The main difference of our compactor from the previous notions is that 
(the condenser of) a compactor is free of this requirement, which makes the definition more flexible and easier to work with. 
Due to this flexibility and succinctness, we believe that our notion might be  amenable for lower bound machineries akin to those for decision problem kernelizations.

\paragraph{Parameterized counting problems on graphs.}
A {\em structure} is a pair $(G,A)$ where $G$ is a graph and $A\subseteq V(G).$
Given a MSOL-formula $\phi$ on structures and some graph class ${\cal G}$,
we consider the following  parameterized  counting problem
$\Pi_{\phi,{\cal G}}.$
\smallskip

\begin{center}
\noindent\fbox{
\begin{minipage}{13.3cm}
\noindent{$\Pi_{\phi,{\cal G}}$}\\
  {\sl Input}: a graph $G\in \cal G$, an non-negative integer $k.$\\
 {\sl Parameter}: $k.$\\
 {\sl Count}: the number of vertex sets $A\subseteq V(G)$ such that $(G,A)\models \phi$ and $|A|=k.$
\end{minipage}
}
\end{center}
\smallskip

 Formally, $\Pi_{\phi,{\cal G}}$  is  the 
pair $(F_{\phi,{\cal G}},\kappa_{\cal G})$, where 
 $F_{\phi,{\cal G}}: {\cal G}\times\Bbb{N}\to\Bbb{N}$\ is a function
with  $F_{\phi,{\cal G}}(G,k)=|\{A\in{V(G)\choose k}\mid  (G,A)\models \phi\}|$ and $\kappa_{{\cal G}}: {\cal G}\times\Bbb{N}\to\Bbb{N}$\ is the function with $\kappa(G,k)=k.$
To see $\Pi_{\phi,{\cal G}}$ as a counting problem, we consider the relation $R_{\phi,{\cal G}}\subseteq \Sigma^*\times \Sigma^*$ where 
if $(x,y)\in R_{\phi,{\cal G}}$, then $x$ encodes 
$(G,k)\in{\cal G}\times \Bbb{N}$
and $y$ encodes an $A\subseteq V(G)$ such that $|A|=k$ and $(G,A)\models \phi.$ 
Clearly,  $F_{\phi,{\cal G}}(G,k)=|\{y\mid (x,y)\in R_{\phi,{\cal G}}\}|.$

\paragraph{Tree decompositions.} 
A \emph{tree decomposition} of a graph $G$ is a pair $D=(T,\chi)$, where $T$ is a tree
and $\chi: V(T)\rightarrow 2^{V(G)}$
 such that:
\begin{enumerate}
\item $\bigcup_{q \in V(T)} \chi(q) = V(G)$,
\item for every edge $\{u,v\} \in E$, there is a $q \in V(T)$ such that $\{u, v\} \subseteq \chi(q)$, and
\item for each $\{x,y\} \subseteq V(T)$ and each $z\in V(T)$ contained in the unique path of $T$ connecting $x$ and $y$, it holds that  $\chi(x) \cap \chi(y) \subseteq \chi(z).$
\end{enumerate}

\noindent We call the vertices of $T$ {\em nodes} of $D$ and the images of $\chi$ {\em bags} of $D.$ The \emph{width} of a  tree decomposition $D=(T,\chi)$ is $\max\{ |\chi(q)| \mid {q \in V(T)}\} - 1.$ The treewidth of a $G$, denoted by $\tw(G)$, is the minimum width over all tree decompositions of $G.$

\paragraph{Treewidth modulators.}
We say that an instance $(G,k)\in {\cal G}\times \Bbb{N}$ of $\Pi_{\phi,{\cal G}}$ is a {\em null} instance if it has no solutions.
Given a graph $G$, we say that a vertex set $A\subseteq V(G)$ is a {\em $t$-treewidth modulator}  of $G$ if 
the removal of $A$ from $G$ leaves a graph of treewidth at most $t$.
Given an MSOL-formula $\phi$
and a graph class ${\cal G}$, we say that $\Pi_{\phi,{\cal G}}$ is  {\em treewidth modulable} 
if there is a constant $t$ (depending on $\phi$ and ${\cal G}$ only) such that,  for every non-null instance $(G,k)$ of $\Pi_{\phi,{\cal G}}$, $G$ has a $t$-treewidth modulator of size at most $t\cdot k.$

Let ${\cal F}_{H}$ be the class of all graphs that do not contain a subdivision of  $H$ as a subgraph. 
The next theorem states our main result.

\begin{theorem}
\label{hobhouse}
For every graph $H$ and every MSOL-formula $\phi$, if $\Pi_{\phi,{\cal F}_{H}}$ is  treewidth modulable, then there is a   
compactor for $\Pi_{\phi,{\cal F}_{H}}$ of size $O(k^2)$ 
with condensing time 
$O(k^2n^2)$ 
and decoding time  $2^{O(k)}.$
\end{theorem}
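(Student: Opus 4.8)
The plan is to combine the treewidth-modulability hypothesis with protrusion-style machinery, adapted to the counting setting. First I would invoke the treewidth-modulability assumption: since $\Pi_{\phi,{\cal F}_H}$ is treewidth modulable, for every non-null instance $(G,k)$ the graph $G$ has a $t$-treewidth modulator $X$ with $|X|\le t\cdot k$. (If $(G,k)$ is a null instance, the condenser simply outputs a fixed symbol and the extractor returns $0$; so henceforth assume non-null, which we may test or simply assume optimistically and let the extractor correct.) A randomly-chosen or algorithmically-found such $X$ is too expensive to compute exactly, so instead I would rely on the known fact (from the line of work \cite{KimLPRRSS16line}) that in $H$-topological-minor-free graphs one can, in polynomial time, find a set $X'$ of size $O(k)$ whose removal yields bounded treewidth, provided one exists; this is where the exclusion of $H$ as a topological minor is genuinely used, via the structure theorem / the constant-factor approximation for treewidth modulators on such classes. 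The output of this stage is a decomposition of $G$ into $X'$ (of size $O(k)$) together with a bounded-treewidth remainder $G-X'$.

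Next I would set up the \emph{protrusion decomposition}. Having $X'$ with $G-X'$ of treewidth $\le t$, a counting-friendly version of the protrusion-decomposition lemma gives a partition of $V(G)$ into $V_0$ with $|V_0| = O(|X'|) = O(k)$ and a collection of at most $O(k)$ parts $V_1,\dots,V_\ell$, each of which induces a $t'$-protrusion (a subgraph of treewidth $\le t'$ with at most $2t'$ vertices in its boundary to the rest). The number of distinct ``types'' of protrusions, under the finite-state equivalence induced by the MSOL-formula $\phi$ evaluated via Courcelle-style automata, is bounded by a function of $\phi$ and $t'$ only. The key counting twist relative to decision kernelization is that we cannot merely \emph{replace} a protrusion by a small representative: we must record, for each protrusion and each boundary-behaviour and each ``number of selected vertices used inside,'' the \emph{count} of extensions. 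So for each protrusion $P_i$ I would compute a polynomial (in a formal variable tracking $|A\cap V(P_i)|$) whose coefficients are the numbers of partial solutions of each automaton-state at the boundary; each such polynomial has degree $\le |V(P_i)|$ but, crucially, for the purpose of the final answer only the coefficients up to degree $k$ matter, and there are only $O(k)$ protrusions, so the aggregate data is of size $O(k)\cdot O(k) \cdot (\text{const}) = O(k^2)$. This aggregate — the list of truncated generating polynomials indexed by protrusion and boundary state, together with the induced subgraph on $V_0$ and the boundary-attachment structure — is the condensing ${\sf P}(G,k)$, and it has size $O(k^2)$ as required. Computing it takes, per protrusion, linear-in-its-size dynamic programming over a width-$t'$ tree decomposition with $k^2$ bookkeeping per bag (tracking state and selection count up to $k$), summing to $O(k^2 n^2)$ overall including the modulator-finding step.

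The extractor ${\sf M}$ then performs a bounded dynamic program purely on the condensed object: it has $O(k)$ protrusions hanging off a boundary structure of size $O(k)$, and it must combine their truncated generating polynomials consistently with the global MSOL-automaton run, choosing how the budget $k$ is split among $V_0$ and the protrusions. Since everything now lives in a structure of size $O(k^2)$ with $O(k)$ ``hard'' pieces, this is a $2^{O(k)}$-time computation (iterate over the $2^{O(k)}$ relevant global state assignments / subsets of $V_0$, and convolve $O(k)$ polynomials of degree $\le k$, which costs $\mathrm{poly}(k)$ per assignment), giving decoding time $2^{O(k)}$. By the finite-state property of MSOL on bounded-treewidth graphs (Courcelle's theorem and its counting refinement \cite{CourcelleMR00fixe}), the reconstructed number equals $F_{\phi,{\cal F}_H}(G,k)$ exactly, so $F = {\sf M}\circ{\sf P}$, and $({\sf P},{\sf M})$ is a polynomial-size compactor of the claimed size and timings. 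The running-time corollary follows from \autoref{atheists}.

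The main obstacle I anticipate is the \emph{counting-correct} protrusion replacement: in the decision world one replaces a protrusion by a constant-size gadget with the same finite state, but here we must preserve not a boolean but a whole (truncated) generating function of solution counts, and we must argue that (i) this data is genuinely of size $O(k^2)$ — which forces the truncation-at-$k$ observation and a careful accounting that the number of protrusions and boundary-state pairs is $O(k)\cdot O(1)$ — and (ii) the global recombination in the extractor correctly multiplies/convolves these local counts without over- or under-counting vertex selections that straddle boundaries, which requires the MSOL-automaton to be set up so that boundary vertices are accounted exactly once. Making the interface between the bounded-treewidth dynamic program and the MSOL-automaton state space precise, and verifying the $O(k^2 n^2)$ condensing bound in the UCM model (in particular that finding the $O(k)$-size treewidth modulator on ${\cal F}_H$ fits within this budget), are the remaining technical points.
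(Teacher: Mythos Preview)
Your proposal is correct and follows essentially the same route as the paper: approximate a treewidth modulator, build a protrusion decomposition with $O(k)$ center and $O(k)$ protrusions, store for each protrusion the counts of partial solutions indexed by $\equiv_{\phi,t}$-class and selection size (your ``truncated generating polynomials'' are exactly the paper's $\#{\sf sol}_{\phi,t}({\bf R},{\bf G}_i,k')$), and have the extractor enumerate subsets of the center and combine. The one point you gloss over is that the constant-factor approximation for the $t$-treewidth modulator on $H$-topological-minor-free graphs is not directly available from \cite{KimLPRRSS16line} but is developed in the paper (Lemma~\ref{exorcize}) via an iterative protrusion-replacement argument; everything else, including the boundary double-counting concern you flag, is handled precisely as you outline.
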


As a corollary of the main theorem we have the following.

\begin{corollary}
\label{shutters}
For every graph $H$ and every MSOL-formula $\phi$, if $\Pi_{\phi,{\cal F}_{H}}$ is  treewidth modulable, then  $\Pi_{\phi,{\cal F}_{H}}$ can be solved 
 in   $O(k^2n^2)+2^{O(k)}$ steps.
\end{corollary}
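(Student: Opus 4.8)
The plan is to obtain the corollary as an immediate consequence of Theorem~\ref{hobhouse}, essentially by rerunning the ``compactor $\Rightarrow$ {\sf FPT}'' direction of Lemma~\ref{atheists} but carrying along the explicit bounds on size, condensing time, and decoding time. Concretely, I would first invoke Theorem~\ref{hobhouse} to fix a compactor $({\sf P},{\sf M})$ for $\Pi_{\phi,{\cal F}_{H}}=(F_{\phi,{\cal F}_{H}},\kappa_{{\cal F}_{H}})$ whose size function $s$ satisfies $s(k)=O(k^2)$, whose condenser ${\sf P}$ runs in time $O(k^2n^2)$, and whose extractor ${\sf M}$ runs in time $2^{O(k)}$. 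The algorithm solving $\Pi_{\phi,{\cal F}_{H}}$ is then the obvious two-phase procedure: on input $(G,k)$ with $G\in{\cal F}_{H}$ and $n=|V(G)|$, first compute $z={\sf P}(G,k)$, then output ${\sf M}(z)$. Correctness is immediate from $F_{\phi,{\cal F}_{H}}={\sf M}\circ{\sf P}$: the returned value ${\sf M}(z)=F_{\phi,{\cal F}_{H}}(G,k)$ is by definition the number of $k$-element vertex sets $A\subseteq V(G)$ with $(G,A)\models\phi$.

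For the running time, I would simply add the two phases. The first phase costs $O(k^2n^2)$ by the bound on the condensing time of $({\sf P},{\sf M})$. The output satisfies $|z|=|{\sf P}(G,k)|\le s(\kappa_{{\cal F}_{H}}(G,k))=s(k)=O(k^2)$; since the decoding time is, by the definition of a compactor, measured as a function of $\kappa$, and Theorem~\ref{hobhouse} bounds it by $2^{O(k)}$, the second phase costs $2^{O(k)}$. Here it is worth a single line of care: this bound is coherent precisely because $s\in{\sf poly}$ forces the condensed instance handed to ${\sf M}$ to have size $O(k^2)$, so no dependence on $n$ re-enters through ${\sf M}$. Summing the two phases yields total time $O(k^2n^2)+2^{O(k)}$, all measured under the UCM model exactly as in Theorem~\ref{hobhouse}.

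Since the statement is a direct corollary, there is no genuine obstacle beyond Theorem~\ref{hobhouse} itself; the only subtlety is the bookkeeping remark above. For completeness I note what a self-contained argument would have to redo, i.e., the content of Theorem~\ref{hobhouse}: build a protrusion decomposition of $G$ from a treewidth modulator of size $O(k)$ (guaranteed, for non-null instances, by treewidth modulability together with the structure of $H$-topological-minor-free graphs, and otherwise reported as absent so the count is $0$), summarise each of the $O(k)$ protrusions by a bounded-size table of MSOL-types refined by the number of chosen vertices inside it, and recombine these $O(k)$ tables of size $O(k)$ over the $O(k)$-vertex core by a $2^{O(k)}$ enumeration of its intersection with the solution; but none of this is needed once Theorem~\ref{hobhouse} is available.
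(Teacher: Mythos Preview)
Your proposal is correct and matches the paper's treatment: the paper states this result as an immediate corollary of Theorem~\ref{hobhouse} without giving a separate proof, and your argument is precisely the ``compactor $\Rightarrow$ {\sf FPT}'' direction of Lemma~\ref{atheists} with the explicit time bounds from Theorem~\ref{hobhouse} plugged in. There is nothing to add.
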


In the above results, the constants hidden in the $O$-notation depend 
on the choice of $\phi$, on the treewidth-modulability constant $t$,  
and on the choice of $H.$

Recall that the above results are stated using the UCM model. 
As for $\Pi_{\phi,{\cal F}_{H}}$, the number of solutions is  $O(n^k)$
and this number can be encoded in $O(k\log n)$ bits. Assuming that summations 
of two $r$-bit numbers can be done in $O(r)$ steps and multiplications of two $r$-bit numbers can be done in $O(r^2)$ steps, then the size of the compactor in Theorem~\ref{hobhouse}
is $O(k^2\log n)$ the condensing and extracting times are $O(k^4n^2\log^2 n)$ and $2^{O(k)}\log^{2}n$ respectively.
Consequently, the running time of 
the algorithm in Corollary~\ref{shutters} is $O(k^4n^2\log^{2}n)+2^{O(k)}\log^{2}n.$

Coming back to the algorithmic meta-theorems on parameterized counting 
problems we should remark
that the problem condition of  Corollary~\ref{shutters} is weaker than MSOL,
as it additionally demands treewidth-modulability. However,
the graph classes where this result applies have unbounded treewidth or rankwidth. 
That way our results can be seen as orthogonal to those of~\cite{CourcelleMR00fixe}.

On the side of FOL, the  problem condition of  Corollary~\ref{shutters} is stronger than FOL, while 
its combinatorial applicability 
includes planar graphs or graphs of bounded genus where, the existing algorithmic meta-theorems 
 require FOL-expressibility (see~\cite{Frick04gene}).

%

\subsection{\bf Outline of the compactor algorithms}

Our approach follows the idea of applying data-reduction based on  protrusion decomposability. This idea was initiated in~\cite{BodlaenderFLPST16meta} for the automated derivation of polynomial kernels on decision problems. The key-concept in~\cite{BodlaenderFLPST16meta} is the notion of a {\sl protrusion}, 
a set of vertices with small neighborhood to the rest of the graph and inducing a graph of small treewidth. 
Also,~\cite{BodlaenderFLPST16meta}  introduced the notion of a {\sl protrusion decomposition}, which is a partition of $G$ 
to $O(k)$ graphs such the first one is a ``center'', of size $O(k)$, and the rest 
are protrusions whose neighborhoods are in the center.

The meta-algorithmic 
machinery of~\cite{BodlaenderFLPST16meta}  is based on the following combinatorial fact:
for the problems in question,
 {\sf YES}-instances -- in our case non-null instances-- admit a protrusion decomposition that, 
 when the input has size $\Omega(k)$,  one of its protrusions is ``big enough''. 
This permits the application of some ``graph surgery'' that consists in 
replacing a big protrusion with a smaller  one and, that way,  creates an 
equivalent instance of the problem (the replacements are based on the MSOL-expressibiliy of the problem). In the case of counting problems,
this protrusion replacement machinery does not work (at least straightforwardly) as  we have 
to keep track, not only of the way some  part of a solution ``invades'' a protrusion, but also 
of the number of {\sl all} those partial solutions. Instead, we take another way that avoids
stepwise protrusion 
replacement. In our approach, the condenser of the compactor  first constructs 
an approximate protrusion decomposition, then,  it computes how many possible partial solutions 
of all possible sizes may exist in each one of the protrusions.
This computation is done by dynamic programming (see Section~\ref{validate}) and produces 
a total set of $O(k^2)$ arithmetic values.  These values, along with 
the combinatorial information of the center of the protrusion decomposition
and the neighborhoods of the protrusions in the center, constitutes the 
output of the condenser. This structure can be stored in $O(k^2)$ space 
(given that arithmetic values can be stored in constant space) and contains 
enough information to obtain the number of all the solutions of the initial instance in $2^{O(k)}$
steps (Section~\ref{validate}).

We stress that the above machinery demands the polynomial-time construction of 
a constant-factor approximation of a  protrusion-decomposition. To our knowledge, this remains an 
open problem in general. So far, no such algorithm has been proposed, even for particular graph classes, mostly because   meta-kernelization machinery in~\cite{BodlaenderFLPST16meta} (and later in~\cite{FominLST16bidi,FominLST10bidi,KimLPRRSS16line,FominLMS12plan})
is based on stepwise protrusion replacement and does not actually need to construct 
such a decomposition. Based on the result in~\cite{KimLPRRSS16line}, we show that 
 that the construction of such an approximate  protrusion decomposition 
is possible on $H$-topological-minor-free graphs, given that 
it is possible to construct an approximate  $t$-treewidth modulator of $G.$
In fact, this can been done in  general graphs using the randomized constant-factor 
approximation algorithm in~\cite{FominLMS12plan}. Responding to the 
need for a deterministic approximation 
we provide a constant-factor
approximation algorithm that finds a $t$-treewidth modulator on $H$-topological-minor free graphs (Section~\ref{renounce}).
This algorithm runs in $O(k^2n^2)$ steps and, besides from being a necessary step of the condenser of our compactor, is of  independent algorithmic interest.

\section{Preliminaries}   \label{excluded}
We use $\Bbb{N}$ to denote the set of all non-negative integers.
Let $\chi:\mathbb{N}^2
\rightarrow \mathbb{N}$ and $\psi: \mathbb{N}
\rightarrow \mathbb{N}.$
We say that $\chi(n,k)=O_{k}(\psi(n))$ if there exists a 
function $\phi:\mathbb{N} \rightarrow \mathbb{N}$
such that  $\chi(n,k)=O(\phi(k)\cdot \psi(n)).$
Given  $a,b\in\Bbb{N}$, we define by $[a,b]=\{a,\ldots,b\}.$ Also, given some $a\in \Bbb{N}$ we define $[a]=\{1,\ldots,a\}.$
Given a set $Z$ and a $k\in \Bbb{N}$, we denote ${Z \choose k}=\{S\subseteq Z\mid |S|=k\}.$

\subsection{Graphs and boundary graphs}

\label{ancients}

\paragraph{Graphs.}
All graphs in this paper are simple and undirected. 
Given a graph $G$, we use $V(G)$ to denote the set of its vertices.  Given a $S\subseteq V(G)$ we denote by $N_{G}(S)$ the set of all neighbours of $S$ 
in $G$ that are not in $S.$ We also set $N_{G}[S]=S\cup N_{G}(S)$ and we use $N(S)$ and $N[S]$ 
as shortcuts of $N_{G}(S)$ and $N_{G}[S]$ (when the index is a graph denoted by $G$).
We define $G-S$ as the graph 
obtained from $G$ if we remove the vertices in $S$, along with the edges incident to them.
The {\em subgraph of $G$ induced  by} $S$ is the graph $G[S]:=G- (V(G)\setminus S).$ Finally, we set $\partial_{G}(S)=N_G(V(G- S)).$
We call $|V(G)|$ the {\em size} of a graph $G$ and $n$ is reserved to denote the size of the input graph for time complexity analysis. 

Given a graph $G$, a {\em subdivision} of $G$ is any graph that is obtained from $G$ after replacing its edges 
by paths with the same endpoints.   We say that a graph $H$ is a {\em  topological minor} of $G$
if $G$ contains as a subgraph some subdivision of $H.$ We also say that $G$ is {\em $H$-topological-minor-free}
if it excludes $H$ as a topological minor. 

\paragraph{Boundaried structures.} \label{speakers}
A {\em labeling} of a graph $G$ is any injective function $\lambda: V(G)\rightarrow\Bbb{N}.$ Given a structure $(G,A),$ we  call $A$ {\em the annotated set} of $(G,A)$ and the vertices in $A$ {\em annotated vertices} of $(G,A).$

A \emph{boundaried structure}, in short a {\em b-structure}, is a triple $\bound{G} = (G,B,A)$ where $G$ is a graph and $B,A\subseteq V(G).$ We say that $B$ is the {\em boundary} of  ${\bf  G}$ 
and $A$ is the {\em annotated set} of ${\bf G}.$ Also we call the vertices of $B$ {\em  boundary vertices} and the vertices in $A$ {\em annotated vertices}.   We use notation ${\cal B}^{(t)}$
to denote all b-structures whose boundary has at most $t$ vertices.
We set $G({\bf G})=G$, $V({\bf G})=V(G)$, $B({\bf G})=B$, $A({\bf G})=A.$ We  refer to $G$ as {\em the underlying graph} of ${\bf G}$
and we always assume that the underlying graph of a  b-structure is accompanied with 
some labelling $\lambda.$ 
Under the presence of such a labelling,
we define the {\em index} of a boundary vertex $v$ as the quantity $|\{u \in B \mid \lambda(u) \leq \lambda(v)\}|$
i.e., the index of $v$   when we arrange the vertices of  $B$ according to $\lambda$ in increasing order.
We extend the notion of index to subsets of $B$ in the natural way, i.e., the index of $S\subseteq B$ consists of the indices of all the vertices in $S.$

A {\em boundaried graph}, in short {\em b-graph}, is any b-structure $\bound{G} = (G,B,A)$ such that $A=V(G).$
For simplicity we use the notation $\bound{G} = (G,B,-)$ to denote b-graphs 
instead of using the heavier notation ${\bf G}=(G,B,V(G)).$ 
For every $t\in\Bbb{N}$, we use $\overline{\cal B}^{(t)}$ to denote the b-graphs in ${\cal B}^{(t)}.$ 
We avoid denoting a boundary graph as an annotated graph as
we want to stress the role of $B$ as a boundary.

We say that two b-structures   
 $\bound{G}_1=(G_1,B_1,A_1)$ and $\bound{G}_2=(G_2,B_2,A_2)$ are {\em compatible}, denoted by $\bound{G}_1\sim \bound{G}_2$,  
if  $A_{1}\cap B_{1}$ and $A_{2}\cap B_{2}$ have the same index and the labeled graphs $G[B_1]$ and $G[B_2]$, where each vertex of $B_i$ is labeled by 
its index,  are identical.
 
%
Given two compatible b-structures ${\bf G}_{1}=(G_{1},B_{1},A_{1})$ and ${\bf G}_{2}=(G_{2},B_{2},A_{2})$, we define 
 ${\bf G}_1\oplus {\bf G}_2$  as the structure $(G,A)$ where 
 \begin{itemize}
 \item the graph $G$ is obtained by taking the disjoint union of $G_1$ and $G_2$ and then identifying boundary vertices of $G_1$ and $G_2$ of the same index, and 
 \item the vertex set $A$ is obtained from $A_{1}$ and $A_{2}$ after identifying 
 equally-indexed   vertices in $A_{1}\cap B_{1}$ and $A_{2}\cap B_{2}.$
 \end{itemize}
 Keep in mind that $(G,A)={\bf G}_{1}\oplus{\bf G}_{2}$ is an annotated graph  and not a b-structure. 
 We always assume that the labels of the boundary of ${\bf G}_{1}$ prevail during the gluing operation, i.e., 
 they are inherited to the identified vertices in $(G,A)$ while the labels of the boundary of ${\bf G}_{2}$ dissapear in $(G,A).$
Especially, when ${\bf G}_1$ and ${\bf G}_2$ are compatible b-graphs, we treat ${\bf G}_{1}\oplus{\bf G}_{2}$ as a graph for notational simplicity.

\paragraph{Treewith of b-structures.}
Given a b-structure  ${\bf G}=(G,B,A)$, we say that the triple $D=(T,\chi,r)$ is a {\em tree decomposition} of ${\bf G}$
if $(T,\chi)$ is a tree decomposition of $G$, $r\in V(T)$, and $\chi(r)=B.$
We see $T$ as a  tree rooted on $r.$ 
The {\em width} of a  tree decomposition $D=(T,\chi,r)$ is the width of the tree decomposition $(T,\chi).$
The treewidth of a b-structure   ${\bf G}$ is the minimum width over all its   tree decompositions
and is denoted by $\tw({\bf G}).$  We use ${\cal T}^{(t)}$ (resp. $\overline{\cal T}^{(t)}$) to denote all b-structures (resp. b-graphs) in ${\cal B}^{(t)}$ (resp. $\overline{\cal B}^{(t)}$) with treewidth at most $t.$

\paragraph{Protrusion decompositions.}
Let $G$ be a graph. Given $\alpha,\beta,\gamma\in\Bbb{N}$, an {\em $(\alpha,\beta,\gamma)$-protrusion decomposition of $G$}
is a sequence of ${\bf G}_{1}=(G_{1},B_{1},-),\ldots,{\bf G}_{s}=(G_{s},B_{s},-)$ of  b-graphs where, given that $X_{i}=V(G_{i})\setminus B_{i}, i\in[s]$, it holds that
\medskip\medskip

\begin{tabular}{lll} 
 &{\bf 1.}\ \mbox{$s\leq \alpha$}& {\bf 2.}\  \mbox{$\forall {i\in[s]},\ {\bf G}_{i}\in\overline{\cal T}^{(\beta)}$}  \\
& {\bf 3.}\  \mbox{$\forall {i\in[s]},\ G_{i}$ is a subgraph of $G$}   & 
{\bf 4.}\ \mbox{$\forall {i,j\in [s]},\ i\neq j\Rightarrow X_{i}\cap X_{j}=\emptyset$}  \\
& {\bf 5.}\  \mbox{$|V(G)\setminus \bigcup_{i\in[s]}X_{i}|\leq \alpha$}  & {\bf 6.}\  \mbox{$\forall {i\in[s]},\ \tw(G[X_{i}])\leq \gamma.$} 
\end{tabular}

\medskip\medskip

\noindent We cal the set $V(G)\setminus \bigcup_{i\in[s]}X_{i}$ {\em center} of the above {$(\alpha,\beta,\gamma)$-protrusion decomposition.}

\noindent {Protrusion decompositions have been introduced in~\cite{BodlaenderFLPST16meta} in the context of kernelization algorithms (see also~\cite{FominLST16bidi,FominLST10bidi}).}
The above definition is a modification of the original one in~\cite{BodlaenderFLPST16meta}, adapted for the needs of our proofs. The only essential modification is the  parameter $\gamma$, used in the last requirement. Intuitively, $\gamma$ bounds the ``internal'' treewidth of each protrusion ${\bf B}_{i}.$

\subsection{MSOL and equivalence on boundaried structures.}
\label{envelope}
%
%

\paragraph{(Counting) Monadic Second Order Logic.} The syntax of Counting Monadic Second Order Logic (CMSO) on graphs includes the logical connectives $\vee,$ $\land,$ $\neg,$ 
$\Leftrightarrow ,$  $\Rightarrow,$ variables for 
vertices, edges, sets of vertices, and sets of edges, the quantifiers $\forall,$ $\exists$ that can be applied 
to these variables and the following  predicates: 
\begin{enumerate}

\item 
$u\in U$ where $u$ is a vertex variable 
and $U$ is a vertex set variable; 
\item 
 $d \in D$ where $d$ is an edge variable and $D$ is an edge 
set variable;
\item 
 $d\multimap u$ where $d$ is an edge variable,  $u$ is a vertex variable, and the interpretation 
is that the edge $d$ is incident with the vertex $u$; 
\item 
 $u \sim v$ where  $u$ and $v$ are 
vertex variables  and the interpretation is that $u$ and $v$ are adjacent; \item 
 equality (``='') of variables representing vertices, edges, sets of vertices, and sets of edges.
 \item the atomic sentence $\mathbf{card}_{q,r}(S)$ that is true if and only if $|S| \equiv q \pmod r.$
\end{enumerate}

If we  restrict the formulas so that  variables are only vertex variables, we obtain the set of FOL-formulas.

\paragraph{Equivalences between  b-structures and b-graphs.}
Let $\phi$ be a MSOL-formula and $t\in\Bbb{N}.$ 
Given two  b-structures   ${\bf G}_{1},{\bf G}_{2}\in {\cal B}^{(t)}$,
we say that ${\bf G}_1\equiv_{\phi,t}{\bf G}_2$ if 
\begin{itemize}
\item  ${\bf G}_{1}\sim {\bf G}_{2}$  and 
\item $\forall {\bf F}\in{\cal B}^{(t)}\ 
{\bf F}\sim {\bf G}_{1}\Rightarrow (
{\bf F}\oplus {\bf G}_1\models\phi\iff {\bf F}\oplus {\bf G}_2\models\phi)$
\end{itemize}
 Notice that $\equiv_{\phi,t}$ is an equivalence relation on ${\cal B}^{(t)}.$ The following result is widely known as Courcelle's theorem
and was proven~\cite{Courcelle90them}. The same result was essentially proven in~\cite{BoriePT92auto} and~\cite{ArnborgLS91easy}.
 The version on structures that we present below appeared in~\cite[Lemma 3.2.]{BodlaenderFLPST16meta}.

 \begin{proposition}
 \label{cockneys}
There exists a computable function $\xi:\Bbb{N}^2\to\Bbb{N}$ such that
for every  CMSO-formula $\phi$ and  every $t\in \Bbb{N}$, the equivalence relation 
$\equiv_{\phi,t}$ has at most $\xi(|\phi|,t)$ equivalence classes.
\end{proposition}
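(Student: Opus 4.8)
The statement to prove is Proposition~\ref{cockneys}: that for every CMSO-formula $\phi$ and every $t\in\Bbb{N}$, the relation $\equiv_{\phi,t}$ on ${\cal B}^{(t)}$ has a bounded number of equivalence classes, with a bound $\xi(|\phi|,t)$ that is a computable function of $|\phi|$ and $t$. The plan is to reduce this to the classical Myhill--Nerode-type statement for MSOL on bounded-treewidth-like structures, and the main conceptual move is to re-encode a b-structure $(G,B,A)$ over a bounded boundary as a \emph{relational structure over a fixed finite signature} so that the standard machinery (Feferman--Vaught / composition method, or equivalently automata on terms) applies directly.

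First I would set up the encoding. A b-structure ${\bf G}=(G,B,A)\in{\cal B}^{(t)}$ carries three kinds of ``extra'' data on top of the graph $G$: the vertex subset $A$ (a unary predicate), the boundary set $B$ with $|B|\le t$, and — crucially — the \emph{indices} of the boundary vertices, i.e. a labelling of $B$ by $\{1,\dots,|B|\}\subseteq[t]$. I would encode this by adding $t$ unary predicates $P_1,\dots,P_t$, where $P_i$ is interpreted as $\{v\}$ if the boundary vertex of index $i$ is $v$ (and $\emptyset$ if there is no boundary vertex of index $i$), plus one more unary predicate $A(\cdot)$ for the annotated set. Thus each ${\bf G}\in{\cal B}^{(t)}$ becomes a structure $\widehat{\bf G}$ over the finite signature $\sigma_t=\{\sim,\multimap, A, P_1,\dots,P_t\}$ (adding edge sorts as the paper's CMSO requires). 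The gluing operation ${\bf F}\oplus{\bf G}$ is then exactly a sum of two $\sigma_t$-structures along the definable set $\bigcup_i P_i$ — the two copies of the boundary of the same index get identified, and the $A$-predicate is the union. The key point to verify here is that compatibility ${\bf F}\sim{\bf G}_1$ is expressible purely in terms of the isomorphism type of a bounded-size substructure (the induced labelled graph on $B$ together with the index-set of $A\cap B$), so there are only finitely many compatibility types, call the set of them $\mathcal{I}_t$, with $|\mathcal{I}_t|$ bounded by a computable function of $t$.

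Next, the heart of the argument: for a fixed compatibility type $I\in\mathcal{I}_t$, I would show that whether ${\bf F}\oplus{\bf G}\models\phi$ depends on ${\bf G}$ only through a bounded amount of information — the MSOL-``$(q,\ell)$-theory'' of $\widehat{\bf G}$ for $q$ the quantifier rank of $\phi$ (and appropriate counting moduli, since $\phi$ is CMSO) relativized to the boundary predicates. This is precisely the Feferman--Vaught composition theorem for MSOL (with modular counting quantifiers) over \emph{amalgams along a bounded interface}: the theory of ${\bf F}\oplus{\bf G}$ up to quantifier rank $q$ is a computable function of the theory of ${\bf F}$, the theory of ${\bf G}$, and the (finitely many) ways quantified sets can split across the bounded shared boundary. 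Since the boundary has at most $t$ vertices, there are at most $2^{t}$ ways a set-variable is partitioned across it and at most $t$ choices for a vertex/edge variable, so the composition is over a finite index set. Concretely, I would either cite the composition theorem directly (it is the content of what is ``widely known as Courcelle's theorem'' in the algorithmic-automata formulation used in~\cite{BodlaenderFLPST16meta,Courcelle90them,BoriePT92auto,ArnborgLS91essy}), or — since the paper labels this Proposition~\ref{cockneys} and attributes it to those sources — simply invoke it: the number of $(q,r)$-MSO-types over $\sigma_t$ is finite and computable from $q,r,t$, giving a bound $\tau(q,r,t)$. Then two b-structures ${\bf G}_1,{\bf G}_2$ of the same compatibility type and the same relativized type are $\equiv_{\phi,t}$-equivalent, so the number of classes is at most $|\mathcal{I}_t|\cdot \tau(q,r,t)=:\xi(|\phi|,t)$, which is computable since $q,r\le|\phi|$.

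The step I expect to be the main obstacle is getting the \emph{gluing} to match the abstract ``sum along an interface'' hypothesis cleanly. Two subtleties need care: (i) in ${\bf F}\oplus{\bf G}$ the labels of ${\bf F}$'s boundary ``prevail'' and ${\bf G}$'s disappear, and the annotated set $A$ is a \emph{union} (not disjoint union) over the boundary — so when splitting a quantified set $S$ across the amalgam one must remember that on the shared boundary the two sides must \emph{agree}, which is handled by additionally guessing the trace $S\cap B$ (one of $\le 2^t$ options) as part of the composition data rather than independently on each side; (ii) since $B$ itself need not induce an independent set, the induced labelled graph $G[B]$ is part of both structures and must be identical for the glue to be defined — but that is exactly what compatibility encodes, so once we quotient by compatibility type this causes no double counting. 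I would make these precise by first proving a small lemma: for compatible ${\bf G}_1\equiv_{\phi,t}{\bf G}_2$ it suffices that $\widehat{\bf G}_1$ and $\widehat{\bf G}_2$ have the same MSO-$q$-type \emph{in the vocabulary expanded by constants for each boundary index}, and then the composition theorem gives the implication. Everything else is bookkeeping over finite sets, and the computability of $\xi$ follows because every quantity in sight (number of compatibility types, number of MSO-types of bounded rank over a finite signature) is effectively computable.
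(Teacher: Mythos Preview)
The paper does not prove Proposition~\ref{cockneys}: it is stated as a known result, attributed to Courcelle~\cite{Courcelle90them} (with~\cite{BoriePT92auto,ArnborgLS91easy}), and the specific formulation for b-structures is cited as~\cite[Lemma~3.2]{BodlaenderFLPST16meta}. So there is no ``paper's own proof'' to compare against.

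Your proposal, by contrast, is an actual proof sketch, and it is the standard one: encode the bounded indexed boundary by $t$ unary constants/predicates, encode the annotated set as one more unary predicate, and invoke the Feferman--Vaught composition theorem for CMSO over an amalgam along a bounded interface. The two subtleties you flag (agreement of set-traces on the shared boundary, and compatibility encoding the isomorphism type of $G[B]$ with its index-labelling and $A\cap B$) are exactly the right ones, and your handling of them is correct. You even anticipated the possibility that the paper simply cites the result --- that is indeed what happens. In short: your write-up is more than the paper provides, and is a correct outline of the underlying argument.
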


Given a MSOL-formula $\phi$ and under the light of Proposition~\ref{cockneys}, we consider a (finite) set 
${\cal R}_{\phi,t}$ containing one minimum-size 
member from each of the equivalence classes of $\equiv_{\phi,t}.$ 
Keep in mind 
that ${\cal R}_{\phi,t}\subseteq {\cal B}^{(t)}.$ 
Notice that for every ${\bf G}\in {\cal B}^{(t)}$, there is a b-structure in ${\cal R}_{\phi,t}$, we denote it by 
${\sf rep}_{\phi,t}({\bf G})$, such that ${\sf rep}_{\phi,t}({\bf G})\equiv_{\phi,t} {\bf G}.$

\section{Approximating  protrusion decompositions}

\label{renounce}

The main result of this section is  a constant-factor approximation algorithm computing a $t$-treewidth modulator (Lemma~\ref{exorcize}). Based on this we also derive a constant-factor approximation algorithm
for a protrusion decomposition (Theorem~\ref{dishonor}).
%
%
For our proofs we need the following lemma that is a consequence of  the results in~\cite{KimLPRRSS16line}.

\begin{lemma}
\label{workweek}
For every $h$-vertex graph $H$ and every $t\in\Bbb{N}$, there exists a constant $c$  and an algorithm that takes as input an $H$-topological-minor-free graph $G$ and a $t$-treewidth modulator $X\subseteq V(G)$ and  outputs a $(c|X|,c,t)$-protrusion decomposition along with tree decompositions of its b-graphs of width at most $c$, in $O_{h+t}( n )$ steps. 
\end{lemma}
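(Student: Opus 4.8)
The plan is to reduce the statement to a result already available in~\cite{KimLPRRSS16line}: on $H$-topological-minor-free graphs, a $t$-treewidth modulator $X$ can be turned, in linear time, into a protrusion decomposition whose center has size $O(|X|)$ and whose parts are protrusions of constant treewidth with constant-size boundaries. The only real content beyond invoking that result is (a) to massage the output so that it literally matches the six-condition definition of an $(\alpha,\beta,\gamma)$-protrusion decomposition given in Subsection~\ref{ancients}, in particular the extra parameter $\gamma$ bounding the internal treewidth $\tw(G[X_i])$, and (b) to verify that the tree decompositions of the b-graphs of width at most $c$ are produced along the way.

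First I would recall the construction underlying~\cite{KimLPRRSS16line}: starting from $G$ and the modulator $X$, one considers the components (or near-components) of $G-X$, groups them by their neighbourhood in $X$, and — using the fact that $G$ is $H$-topological-minor-free so that the bipartite ``torso'' structure between $X$ and $G-X$ has bounded average degree — shows that all but $O(|X|)$ vertices of $G-X$ lie in pieces attached to $X$ through a boundary of constant size. Each such piece $Y$, together with its neighbourhood $N_G(Y)\subseteq X$, forms a b-graph ${\bf G}_i=(G_i,B_i,-)$ with $G_i=G[Y\cup N_G(Y)]$, $B_i=N_G(Y)$, and $X_i=Y$. Since $Y\subseteq V(G-X)$, we get $\tw(G[X_i])=\tw(G[Y])\le \tw(G-X)\le t$, which is exactly Condition~6 with $\gamma=t$; and since $|B_i|$ is constant and $G[Y]$ has treewidth at most $t$, adding back at most a constant number of boundary vertices gives $\tw({\bf G}_i)\le c$ for a suitable constant $c=c(h,t)$, i.e.\ Conditions~2. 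Conditions~1, 3, 4 (the $X_i$'s are disjoint, each $G_i$ is an induced — hence ordinary — subgraph of $G$, and the number of parts is $O(|X|)$) and Condition~5 (the center $V(G)\setminus\bigcup_i X_i$ consists of $X$ plus the $O(|X|)$ leftover vertices, hence has size $O(|X|)$) follow directly from the grouping argument. Choosing $c$ to be the maximum of all these constant bounds yields a $(c|X|,c,t)$-protrusion decomposition.

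For the tree decompositions: the algorithm of~\cite{KimLPRRSS16line} already computes, in linear time, a width-$O(t)$ tree decomposition of each piece $G[Y]$ (for instance via a constant-factor treewidth approximation, or because these pieces come with bounded-treewidth certificates from the modulator computation); rooting such a decomposition and adding the at most $|B_i|=O(1)$ boundary vertices to every bag raises the width by only a constant, producing a tree decomposition $D_i=(T_i,\chi_i,r_i)$ of ${\bf G}_i$ with $\chi_i(r_i)=B_i$ and width at most $c$. The running time is $O_{h+t}(n)$ because every step — computing components of $G-X$, grouping by neighbourhood, the torso/degeneracy bound, and the per-piece tree decompositions — is linear in $n$ with a multiplicative constant depending only on $h$ and $t$, as established in~\cite{KimLPRRSS16line}.

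The main obstacle I anticipate is bookkeeping rather than mathematics: making sure the definition of protrusion decomposition used here (with its nonstandard parameter $\gamma$ and its b-graph formalism, where $G_i$ must be a genuine subgraph of $G$ and the boundary bag must be the root of the tree decomposition) is met on the nose by the output of the cited construction, and tracking how the various constants ($h$, $t$, the excluded-minor bound, the treewidth-approximation ratio) combine into the single constant $c$. Once the correspondence between the two formalisms is set up carefully, the proof is a direct translation of the result of~\cite{KimLPRRSS16line}.
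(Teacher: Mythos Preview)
Your overall strategy---invoke the construction of~\cite{KimLPRRSS16line} and then massage its output into the six-condition format---is exactly the paper's approach. But your description of that construction contains a real inaccuracy that would make the verification fail as written.

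You say one ``considers the components (or near-components) of $G-X$, groups them by their neighbourhood in $X$'', and then take $B_i=N_G(Y)\subseteq X$. The problem is that a single connected component of $G-X$ may have a neighbourhood in $X$ of size up to $|X|$, not a constant depending only on $h$ and $t$; so Condition~2 (${\bf G}_i\in\overline{\cal T}^{(c)}$, in particular $|B_i|\le c$) need not hold. The actual construction first \emph{enlarges} $X$ to a superset $Y_0\supseteq X$ (using Algorithm~1 of~\cite{KimLPRRSS16line}) so that every connected component of $G-Y_0$ has neighbourhood of size $< h+2t$ in $Y_0$, and only \emph{then} groups the components of $G-Y_0$ by their neighbourhood in $Y_0$. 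The center of the protrusion decomposition is $Y_0$, not $X$, and the boundaries $B_i$ sit in $Y_0$, not in $X$.

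Two nontrivial bounds are needed here that your sketch glosses over. First, to get $|Y_0|=O_{h,t}(|X|)$ one must bound the number of disjoint connected subsets of $G-X$ with at least $h$ neighbours in $X$; second, to get $s=O_{h,t}(|X|)$ one must bound the number of distinct neighbourhood-classes of components of $G-Y_0$. The paper proves both via a clique-counting argument in $H$-topological-minor-free graphs (its Claims~1 and~2); your ``bounded average degree of the torso'' is in the right spirit but is not the argument used and would need to be made precise. Once the center is taken to be $Y_0$, your remaining verification goes through: since $Y_i\subseteq V(G)\setminus Y_0\subseteq V(G)\setminus X$, one still has $\tw(G[Y_i])\le t$, and adding the $O(1)$ boundary vertices to every bag of a width-$t$ tree decomposition of $G[Y_i]$ (computed e.g.\ by Bodlaender's algorithm) gives the required rooted tree decomposition of ${\bf G}_i$ of width at most $c$.
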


\begin{proof}[Proof]
We may assume that $h\geq 3$ and $G$ is an $H$ topological-minor-free graph.  It is known that for any $h\geq 3$, there is a $\beta_{h}>h$ 
such that every $K_h$-topological-minor-free graph on $n$ vertices has at most 
$\beta_{h} n$ cliques of any non-negative size (including 0 or 1 (see e.g.~\cite{LeeO15numb,FominOT10rank,ReedW09alin,NorineSTW06prop}). 
We first prove the following claim:

\noindent{\em Claim 1}: If $X\subseteq V(G)$ and $C_1,\ldots , C_p$  is a collection of pairwise vertex-disjoint connected subsets  of $V(G)\setminus X$ such that $|N(C_i)\cap X|\geq h$, then  $p\leq \beta_{h} |X|.$

\noindent{\em Proof of claim}:
Let $W_0=G[X].$ 
Consider the following procedure iterating over $i=1,\ldots , p$: for each $i$, we choose two vertices $u,v \in N(C_i)\cap X$ which are non-adjacent in $W_{i-1}$ and 
add the edge $uv$ in $W_{i-1}.$ Let $W_i$ be the resulting graph. For each $i\geq 0$, observe that $W_i$ is a topological minor of $G$, therefore $W_i$ itself is $H$-topological-minor-free. 
Since an $H$-topological-minor-free graph does not contain as a subgraph a clique on $h$ vertices and $|N(C_i)\cap X|\geq h$, not every vertex pair in $N(C_i)\cap X$ is adjacent in $W_{i-1}.$ 
This means that the procedure will be carried out up to  the $p$-th iteration. We conclude that 
$W_p$ has at least $p$ edges, therefore also at least $p$ cliques.  As $W_p$ is $H$-topological-minor-free, we derive the claim. 

 In~\cite[Algorithm 1]{KimLPRRSS16line}, an $O_{r+t}(n)$-time algorithm ${\sf A}$,  is presented which takes as input a graph $G$, a $t$-treewidth modulator $X\subseteq V(G)$, and a positive integer $r$, and outputs some $Y_0\supseteq X$ and a collection ${\cal C}$ of pairwise vertex-disjoint connected subsets of $V(G)-X$ such that
\begin{enumerate}
\item[(i)] $|Y_0|\leq |X|+2t|{\cal C}|$ ,
\item[(ii)] for every $C\in {\cal C}$,\  $|N(C)\cap X|\geq r$, and 
\item[(iii)] for every connected component $Z$ of $G\setminus Y_0$, $|N(Z)\cap X|<r$ and $|N(Z)\cap Y_0|<r+2t.$ 
\end{enumerate}
The set $Y_{0}$ and the collection ${\cal C}$
is produced by~\cite[Algorithm 1]{KimLPRRSS16line}. In particular the sets in ${\cal C}$ are 
the sets denoted by  ``$C_{B}$'', while condition (i) is  justified by the course of that algorithm.
Condition (iii) is proved in~\cite[Lemma 7]{KimLPRRSS16line}.
Apply the algorithm ${\sf A}$ for $r:=h$ in time $O_{h+t}(n)$ and observe that, from Claim~1, $|{\cal C}|\leq \beta_{h}|X|$, therefore 
\begin{eqnarray}
|Y_{0}|\leq (1+2t\beta_{h})|X|\label{survival}
\end{eqnarray}
Let 
${\cal Y}_1,\ldots , {\cal Y}_{s}$ be the partition of the connected components of $G-Y_0$ into maximal collections of connected components of $G-Y_0$  that
have the same neighborhood in $Y_0.$ 
We set $Y_{i}=\bigcup_{Y\in {\cal Y}_i}Y$ and we call $Y_{1},\ldots,Y_{s}$ {\em clusters} of $G-Y_0.$ 
Clearly, the clusters of $G-Y_0$ can be found in linear time. 

\noindent{\em Claim 2.}  the number $s$ of clusters of  $G-Y_0$ is at most $\beta_{h}|Y_0|.$ 

\noindent{\em Proof of Claim.} 
Let $I\subseteq \{1,\ldots, s\}$ be a maximum set of indices such that
there exist $|I|$ pairwise-distinct vertex pairs $(u_i,v_i)$ for $i\in I$ satisfying $u_i,v_i\in N(V(Y_i))$ and $u_i\neq v_i.$ 
Among all such sets of indices, we select $I$ so as to minimize $\sum_{i\in  I} |N(V(Y_i))|$, and fix a pair $(u_i,v_i)$ for each $i\in I.$
Consider the graph $K=(Y_0,\{u_iv_i: i\in I\})$ and keep in mind that $K$ is a topological minor of $G$, thus it is $H$-topological-minor-free. 

By maximality of $I$, each $N(V(Y_i))$ for $i\in [s]\setminus I$ is a clique  in $K$ (whose size is some non-negative number). 
Moreover, it holds that $|N(V(Y_i))|\neq 2$ for all $i\in [s]\setminus I.$ Indeed, if $N(V(Y_i))=\{u,v\}$, then the maximality of $I$ 
implies that an edge between the vertex pair $(u,v)$ has been added to $K$ for some $j\in I.$ Because $Y_i$ and $Y_j$ are distinct clusters, it follows that $|N(V(Y_i))|< |N(V(Y_j))|.$ 
However, $I\cup \{i\} \setminus \{j\}$ provides the same graph $K$, while the sum of neighborhood sizes over the index set strictly decreases, a contradiction.

Now, set $\phi(i)=\{u_i,v_i\}$ for every $i\in I$, and $\phi(i)=N_G(Y_i)$ for every $i\in [s]\setminus I.$ 
By the previous argument, it is easy to see  that $\phi$ is an  injective mapping $\phi$ from $[s]$ to vertex sets of cliques in $K.$ 
As $K$ is $H$-topological-minor-free,  $s\leq \beta_h|Y_0|$ and the claim holds.

Lastly, we set ${\bf G}_i=(G[N[Y_i]],N(V(Y_i)),-), i\in[s]$ and argue that ${\bf G}_1,\ldots , {\bf G}_{s}$ is a $(c|X|,c,t)$-protrusion decomposition, where $c:=\beta_h(1+2t\beta_{h}).$ Consider a tree decomposition $(T_i,\chi_i)$ of $G[Y_i]$ of width at most $t$, which exists 
since each connected component of $G-Y_0$ is a subgraph of $G-X$ (because $Y_0\supseteq X$) and $X$ is a $t$-treewidth modulator. Each $(T_{i},\chi_{i})$ can be 
computed in $O_{t}(n)$ steps, using the algorithm in~\cite{Bodlaender96ali}.  
By adding a root node $r_i$ to an arbitrary node of $T_i$ with $\chi'(r_i)=N(V(Y_i))$ and letting $\chi'(x)=\chi(x)\cup N(V(Y_i))$ for every node $x\in V(T_{i})$, 
we can obtain a tree decomposition $(T'_i,\chi',r_i)$ of the b-graph ${\bf G}_i.$ Observe that the width of $(T'_i,\chi',r_i)$ is  at most $3t+h\leq h(t+1)\leq c$, by the third condition in the output of ${\sf A}.$
Now, it is straightforward to verify that
\begin{enumerate}
\item $s \leq c|X|$, because of \eqref{survival} and  Claim 2,
\item $\forall {i\in[s]},\ {\bf G}_{i}\in\overline{\cal T}^{(c)}$,
\item  $\forall {i\in[s]},\ G[N[Y_i]]$ is a subgraph of $G$,
\item $\forall {i,j\in [s]},\ i\neq j\Rightarrow Y_{i}\cap Y_{j}=\emptyset$,
\item $|V(G)\setminus \bigcup_{i\in[s]}Y_{i}| =|Y_0|\leq (1+2t\beta_{h})|X|\leq c|X|$, because of  \eqref{survival}, and 
\item $\forall {i\in[s]},\ \tw(G[Y_{i}])\leq t.$
\end{enumerate}
The last item holds because for each $i\in[s]$, $G[Y_{i}]$ is a subgraph of $G\setminus X$ and $X$ is a $t$-modulator of $G.$
\end{proof}


As a consequence of Lemma~\ref{workweek}, as long as the input graph $G$ has many  
vertices (linear in $k$), there is a vertex set $Y$ whose (internal) treewidth is at most $t$ and contains sufficiently many vertices. 
The key step of the approximation algorithm, to be shown in the next lemma, is to replace $N[Y]$
with a smaller graph of the same `type'. 
Two conditions are to be met during the replacement: first, the minimum-size of a $t$-treewidth modulator remains the same. Secondly, 
a $t$-treewidth modulator of the new graph can be `lifted' to a $t$-treewidth modulator of the graph before the replacement without increasing the size. 

\begin{lemma}
\label{exorcize}
For every $h$-vertex graph $H$ and every $t$, there is a constant $c$, depending on $h$ and $t$, and an algorithm 
that, given a graph $G\in{\cal F}_{H}$ and $k\in\Bbb{N}$, 
either outputs an $t$-treewidth-modulator of $G$ of size at most $c\cdot k$ or reports that no $t$-treewidth 
modulator of $G$ exists with size at most $k.$ This algorithm runs in $O_{h+t}(n^2)$ steps.
\end{lemma}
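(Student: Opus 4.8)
The plan is to realise the ``protrusion-replacement'' strategy announced just before the lemma as a data reduction that, given $(G,k)$, shrinks $G$ to an equivalent graph on $O_{h+t}(\mathrm{opt}(G))$ vertices --- where $\mathrm{opt}(G)$ is the minimum size of a $t$-treewidth modulator of $G$ --- and then returns the whole vertex set of the shrunken graph, lifted back to $G$. Fix $H$ on $h$ vertices, fix $t$, and let $c_0=c_0(h,t)$ be the constant of \autoref{workweek}. The algorithm maintains a graph $G'$ (initially $G$) and an integer offset $\sigma\ge 0$ (initially $0$), preserving two invariants: $\mathrm{opt}(G)=\mathrm{opt}(G')+\sigma$, and every $t$-treewidth modulator of $G'$ can be turned, in time linear in $n$, into one of $G$ of size at most its own size plus $\sigma$. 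While $G'$ contains a \emph{large protrusion} --- a b-graph $(G'[N[Y]],N(Y),-)$ with $|N(Y)|\le c_0$, $\tw(G'[Y])\le t$, and $|Y|$ exceeding a constant threshold $\rho=\rho(h,t)$ --- the algorithm replaces $N[Y]$ by a bounded-size b-graph of the same type, which strictly decreases $|V(G')|$ and increases $\sigma$; once no large protrusion remains it outputs $V(G')$, lifted to a $t$-treewidth modulator $Z$ of $G$, returning $Z$ if $|Z|\le ck$ and otherwise reporting that $G$ has no $t$-treewidth modulator of size $\le k$. The reductions pay off here: irreducibility will be shown to force $|V(G')|\le c_1\cdot\mathrm{opt}(G')$ for a constant $c_1=c_1(h,t)$, so $|Z|\le|V(G')|+\sigma\le c_1\cdot\mathrm{opt}(G')+\sigma\le c_1\cdot\mathrm{opt}(G)$; taking $c:=c_1$ we get $|Z|\le ck$ whenever $\mathrm{opt}(G)\le k$, while $|Z|>ck$ implies $\mathrm{opt}(G)>k$, which validates the reported answer.

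The reduction relies on the fact that $t$-treewidth deletion has \emph{finite integer index} on b-graphs of boundary size at most $c_0$ (see~\cite{BodlaenderFLPST16meta,Fluiter97,BodlaendervA01a}): assign to each such ${\bf W}=(W,B,-)$ the function recording, for every choice of ``deleted boundary vertices'' and every placement of the surviving boundary inside a single width-$\le t$ bag, the minimum number of additional interior deletions needed to bring $W$ below treewidth $t$, and call two compatible b-graphs equivalent when their functions differ by an additive constant. This relation has finitely many classes, depending only on $h$ and $t$ (this is where \autoref{cockneys} enters, applied to the fixed {\rm MSOL}-sentence stating that deleting the free vertex set leaves treewidth at most $t$, and upgraded from a yes/no to an integer index); in each class we precompute a representative of minimum size that is moreover no harder to modulate than any member of the class, together with the non-negative constant relating the two. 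Given a large protrusion of $G'$ --- delivered by \autoref{workweek} with a width-$\le c_0$ tree decomposition --- the algorithm reads off its class by bottom-up dynamic programming along that tree decomposition in time linear in the protrusion's size, substitutes the stored representative, and adds the corresponding constant to $\sigma$. The two invariants are maintained because $t$-treewidth modulators behave predictably under the $\oplus$-operation --- the interior of a protrusion is disjoint from, and has a fixed attachment to, the rest of $G'$ --- and a modulator of the reduced graph is lifted just by undoing each substitution and re-inserting, inside the reinstated protrusion, the no-larger interior deletions witnessing its class.

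It remains to produce the large protrusions and to see why their disappearance certifies smallness. The latter is a pigeonhole: if $\mathrm{opt}(G')$ were below $|V(G')|/\bigl(c_0(\rho+1)\bigr)$, then feeding \autoref{workweek} an optimal $t$-treewidth modulator of $G'$ would produce a $(c_0\cdot\mathrm{opt}(G'),c_0,t)$-protrusion decomposition with at most $c_0\cdot\mathrm{opt}(G')$ b-graphs whose interiors have treewidth $\le t$, and then one of them would have more than $\rho$ interior vertices --- a large protrusion; contrapositively, irreducibility yields $|V(G')|\le c_1\cdot\mathrm{opt}(G')$ with $c_1:=c_0(\rho+1)$. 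For the former, the same argument applied to a $t$-treewidth modulator $X$ maintained alongside $G'$ shows that whenever $|V(G')|$ exceeds a constant (depending on $h,t$) times $|X|$, the decomposition returned by \autoref{workweek} on $(G',X)$ already exhibits a large protrusion; and $X$ is kept within a constant factor of $\mathrm{opt}(G')$ by the customary iterative-compression bootstrap, the key point being that every protrusion produced by \autoref{workweek} is disjoint from $X$, so the replacements leave $X$ and its size intact while $G'$ shrinks. Each round invokes \autoref{workweek} once ($O_{h+t}(n)$ steps) and deletes at least one vertex, hence at most $n$ rounds, giving a total running time of $O_{h+t}(n^2)$.

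The genuinely delicate step is pinning down the replacement rule: one needs (a) a notion of type for $t$-treewidth deletion on boundaried graphs that has finitely many classes and can be evaluated by linear-time dynamic programming on a bounded-width tree decomposition; (b) a bounded-size representative of each class that is simultaneously no harder to modulate than every member of the class, so that $\sigma$ stays non-negative and the lifting never inflates a modulator; and (c) a correct handling, within this bookkeeping, of the interplay between a possibly large protrusion boundary and the demand that the surviving boundary sit in a single width-$\le t$ bag when the pieces are reglued. A lesser, mostly technical, obstacle is making the modulator bootstrap deterministic and keeping it within time $O_{h+t}(n^2)$: one must prevent the maintained modulator from drifting beyond a constant factor of the optimum across the $\Theta(n)$ reductions, and charge the calls to \autoref{workweek} and the dynamic programming against the removed vertices.
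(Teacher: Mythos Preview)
Your overall architecture --- iterate protrusion replacement until the graph is small, then return its whole vertex set lifted back --- is the same as the paper's, and your use of finite integer index with an additive offset $\sigma$ is a legitimate variant of the paper's device. The paper instead records, for each boundaried piece and each annotation size $i\le d$, the \emph{set} of $\equiv_{\phi_t,d}$-classes realised by $(G,B,A)$ with $|A|=i$; two pieces with the same such ``$d$-type'' are interchangeable and, crucially, this equivalence preserves $\mathrm{opt}$ \emph{exactly} (their Claim on $d$-type equivalence), so no offset is needed.

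The genuine gap is in how you \emph{find} the large protrusions. You invoke \autoref{workweek}, which requires a $t$-treewidth modulator $X$ as input, and your correctness then hinges on $|X|=O(\mathrm{opt}(G'))$ throughout --- otherwise ``no large protrusion with respect to $X$'' only yields $|V(G')|\le c_1|X|$, not $|V(G')|\le c_1\,\mathrm{opt}(G')$. You defer this to a ``customary iterative-compression bootstrap'', but that is circular here: the compression step would itself have to turn a size-$(m{+}1)$ modulator into a constant-factor approximate one, which is precisely the problem being solved. Worse, under FII replacement $\mathrm{opt}(G')$ may strictly drop while $|X|$ stays fixed (the replaced interiors are disjoint from $X$), so the ratio $|X|/\mathrm{opt}(G')$ can drift without bound across the $\Theta(n)$ rounds. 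You flag this as a ``lesser, mostly technical'' obstacle; it is in fact the crux.

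The paper sidesteps the bootstrap entirely by exploiting that the sought protrusion has \emph{constant} size ($b<|Y|\le 2b$), constant boundary, and constant internal treewidth: its existence is expressible by a fixed FOL sentence, and on $H$-topological-minor-free graphs (which have bounded expansion) one can model-check this sentence and enumerate a witness $Y$ in deterministic linear time, with no modulator in hand. Since the paper's equivalence preserves $\mathrm{opt}$ exactly, the target $k$ itself serves as the yardstick: if $|V(G_i)|>ck$ and the FOL routine finds no protrusion, then $\mathrm{opt}(G_i)>k$, hence $\mathrm{opt}(G)>k$. This FOL-based detection is the missing ingredient in your argument.
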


\begin{proof}
Let $c'$ be the constant from Lemma~\ref{workweek}; for any $t$-treewidth modulator $X$ of $G$, there is 
 $(c'|X|,c',t)$-protrusion decomposition. 
We set $c=c'(b+1)$, and the constant $b$ shall be fixed later. 
We first observe that there is an MSOL-formula $\phi_{t}$ such that given a structure $(G,A)$, 
$A$ is a $t$-treewidth modulator iff $(G,A)\models \phi_{t}.$  To see this, take into account 
that for every $t$, there exits a set ${\cal O}_{t}$ of graphs such  that $A$ is a $t$-treewidth modulator of $G$ iff 
$G\setminus A$ does not contain any subdivision of a graph in ${\cal O}_{t}.$
As topological minor containment can be expressed in CMSOL, one can 
use the this observation to construct $\phi_{t}$, as required. 

The next claim states that whenever $G$ has sufficiently many vertices, either it contains a boundaried graph on a vertex set $Y$ (which we shall replace by a boundaried graph of the same `type' with strictly smaller size) 
or it does not have a $t$-treewidth modulator of size at most $k.$

\begin{claim}\label{trustful}
If $G$ has a $t$-treewidth modulator of size at most $k$
and $|V(G)|> ck$, then 
$G$ contains a vertex subset $Y$ such that $\partial_{G}(Y)\leq 2c'+1$, $b<|Y|\leq 2b$ and $\tw(G[Y\setminus \partial_{G}(Y)])\leq t.$
\end{claim}
\noindent{\em Proof of claim:} 
By the assumption and the choice of $c'$, we know that $G$ has a 
 $(c'k,c',t)$-protrusion decomposition ${\bf H}_{1},\ldots,{\bf H}_{s}.$ 
As $|G|>c'k+c'k\cdot b$,  there will be an $h\in [s]$ such that $|{\bf H}_h|>b.$
We set ${\bf H}_{h}=(H_{h},B_{h},-)$ and $X_{h}=V(H_{h})\setminus B_{h}.$
Notice that $Z=V({H}_{h})$ satisfies 
$|\partial_{G}(Z)|\leq c'$ and $\tw(G[Z\setminus \partial_{G}(Z)])\leq t.$
From Lemma~\cite[]{} $V(H_h)$ contains a subset $Y$ 
where  $|\partial_{G}(Y)|\leq 2c'+1$ and $b<|Y|\leq 2b.$ The same proof implies that  $Y\setminus \partial_{G}(Y)\subseteq X_{h}.$ This,  
together with the fact that $\tw(G[X_{h}])\leq t$, imply that $\tw(G[Y\setminus \partial_{G}(Y)])\leq t.$\qed

Now we want to fix the constant $b.$ Let $d=2c'+1.$
Consider a subset $\overline{\cal T}^{(d)}_t\subseteq \overline{\cal T}^{(d)}$ which 
consists of b-graphs ${\bf G}=(G,B,-)$ of treewidth at most $d$ satisfying $\tw(G\setminus B)\leq t.$ 
Clearly, for a vertex set $Y$ of a graph $G$ satisfying the conditions of Claim~\ref{trustful}, 
the b-graph $(G[Y],\partial_G(Y),-)$ is a member of $\overline{\cal T}^{(d)}_t.$
For ${\bf G}\in \overline{\cal T}^{(d)}_t$ and an integer $i\in\Bbb{N}$, 
we define $${\sf type}_{i}({\bf G})=\{\rep_{\phi_t,d}(G,B,A)\mid  A\in {V(G)\choose i}\}.$$ 
and write ${\sf type}_{\leq \ell}({\bf G})=\langle{\sf type}_{0}({\bf G}),\ldots,{\sf type}_{\ell}({\bf G})\rangle.$
We say that ${\bf G}_1$ and ${\bf G}_2$ are \emph{$\ell$-type-equivalent} for $t$ if ${\sf type}_{\leq \ell}({\bf G}_1)={\sf type}_{\leq \ell}({\bf G}_2).$
Intuitively, that two b-graphs ${\bf G}_1$ and ${\bf G}_2$ have the same $\ell$-type 
for $t$ means that for any (partial) $t$-treewidth modulator of ${\bf G}_1$, ${\bf G}_2$ has a (partial) $t$-treewidth modulator of the same size and achieving 
an identical `state' on the boundary, and vice versa. 
Notice that when ${\bf G}_1$ and ${\bf G}_2$ are $\ell$-type-equivalent for some $t$, they have the same boundary size.

\begin{claim}\label{glossing}
Let ${\bf G}_1,{\bf G}_2 \in \overline{\cal T}^{(d)}_t$ be $d$-type-equivalent.
Let ${\bf F}\in \overline{\cal T}^{(d)}$ be an arbitrary b-graph with $|B({\bf F})|=|B({\bf G}_1)|.$ 
Then for every $k\in\Bbb{N}$, ${\bf F}\oplus{\bf G}_{1}$ has a $t$-treewidth modulator of size at most $k$  
if and only if  ${\bf F}\oplus{\bf G}_{2}$ does. 
\end{claim}
\noindent {\em Proof of claim:}
We remind that ${\bf G}_1$ and ${\bf G}_2$ are compatible because of ${\sf type}_0({\bf G}_1)={\sf type}_0({\bf G}_2)$, and 
thus ${\bf F}\oplus{\bf G}_{2}$ is well-defined. Let ${\bf G}_i=(G_i,B_1,-)$ for $i=1,2$ and let ${\bf G}=(G,B,-).$
Suppose that $A$ is  a $t$-treewidth modulator of ${\bf F}\oplus {\bf G}_1$ of size at most $k.$ 
Furthermore, we can assume that $|A\cap V(G_1)|\leq d.$ Indeed, if this is not the case then, because 
$\tw(G_1\setminus B_1)\leq t$,  we could replace 
$A$ by $A^*=(A\setminus  V(G_1))\cup B_1$ that is also a $t$-treewidth modulator of $G$ and $|A^*|\leq |A|.$

We set $L_1=V(G_{1})\cap A$, $L=A\setminus (V(G_{1})\setminus B_1)$, and note that $|L_1|\leq d.$ 
As  ${\sf sign}_{\leq d}({\bf G}_{1})={\sf sign}_{\leq d}({\bf G}_{2})$, 
we also have that  ${\sf sign}_{|L_1|}({\bf G}_{1})={\sf sign}_{|L_1|}({\bf G}_{2}).$ 
This implies that there is an $L_{2}\subseteq{V(G_2)\choose |L_1|}$ such that $(G_{1},B_1,L_1)\equiv_{\phi_t,d} (G_{2},B_2,L_{2}).$
From this equivalence, we derive that $(F,B,L)\oplus (G_{1},B_1,L_{1})\models\phi_t \iff (F,B,L)\oplus (G_{2},B_2,L_{2}) \models \phi_t$
or, equivalently, $(G,A)\models\phi_t\iff (G',A') \models \phi_t.$ 
Finally observe that $|L\cup L_{1}|=|L\cup L_2|.$ 
The opposite direction can be proved in the same way (we only exchange the roles of ${\bf G}_1$ and ${\bf G}_2$).\qed

Since ${\sf type}_{i}({\bf G})$ is a subset of the representatives ${\cal R}_{\phi_t,d}$ and $|{\cal R}_{\phi_t,d}|\leq \xi(|\phi_t|,d)$ for some function $\xi$ by Proposition~\ref{cockneys}, 
there are at most $2^{\xi(|\phi_t|,d)\cdot (\ell+1)}$ distinct $\ell$-types for $t.$ 
Since $d$-type-equivalence is an equivalence relation on $\overline{\cal T}^{(d)}_t$, 
we can partition the set $\overline{\cal T}^{(d)}_t$ of b-graphs into $2^{\xi(|\phi_t|,d)\cdot (d+1)}$-many equivalence classes under $d$-type-equivalence. 
Let ${\cal R}$ be the set containing a minimum-size b-graph of each equivalence class of $\overline{\cal T}^{(d)}_t.$ 
We set the constant $b:=\max_{{\bf G} \in {\cal R}}|{\bf G}|. $

Consider a routine {\sf B} on $G$ which outputs a vertex subset  $Y\subseteq V(G)$ such that 
$|\partial_{G}(Y)|\leq d$,  $\tw(G[Y\setminus \partial_{G}(Y)])\leq t$ and $b<|Y|\leq 2b$, or reports that no such $Y$ exists.
Notice that there exist a FOL-formula $\psi_{b,t}$ 
such that $G\models \psi_{b,t}$ if and only if a desired vertex subset $Y$ exists.  
As proved in~\cite{DvorakKR10deci}, model-checking for FOL-formulas can be done in linear time for classes of graphs with bounded expansion (which include $H$-topological-minor-free graphs). Moreover, according to~\cite{KazanaS13enum}, in the same graph classes, answers to first-order queries can be enumerated with constant delay after a linear time preprocessing. 
Therefore, there exists a routine {\sf B} that, given a graph $G$, either correctly reports that $G$ does not contain  a set $Y$ as above or outputs one in $O_{h,t}(n)$-steps.

We present the approximation algorithm. 
Starting from $G_1:=G$ and iterating over $i=1,\ldots $, we run the routine {\sf B} as long as $|G_i|> ck.$ 
If the routine ${\sf B}$ reports that no such set $Y$ exists, then the algorithm reports that $G$ contains no $t$-treewidth modulator of size at most $k$ exists and terminates. 
Otherwise, we set $G_{i+1}:={\bf F}\oplus {\bf G}'$ where $G_i={\bf F}\oplus (G_i[Y],\partial_{G_i}(Y),-)$, and ${\bf G}'$ is the member of ${\cal R}$ which is 
$d$-type-equivalent with $(G_i[Y],\partial_{G_i}(Y),-).$ 
Clearly, each iteration can be performed in $O_{h+t}(n)$ steps, which is the runtime of the routine {\sf B}. 
At each iteration $i$, we have $|{\bf G}'|\leq b<|Y|$ and thus the algorithm terminates in at most $n$ iterations.
Therefore, in $O_{h+t}(n^2)$ steps, we  either report that $G$ has no $t$-treewidth modulator of size $\leq k$, or 
produce a sequence $G=G_{1},G_{2},\ldots,G_{q}$ of graphs with $|G_q|\leq ck.$ 

Let us see the correctness of the algorithm.
If the answer of {\sf B} is negative at iteration $i$, the condition $|V(G_i)|>ck$ and Claim~\ref{trustful} 
implies that $G_i$ does not contain any $t$-treewidth modulator of size at most $k.$ Since 
the minimum-size of a $t$-treewidth modulator remains the same for $G$ and $G_i$ by Claim~\ref{glossing}, 
$G$ does not contain any $t$-treewidth modulator of size $k.$ 

Suppose the algorithm produces a sequence $(G=)G_{1},G_{2},\ldots,G_{q}.$ 
Notice that $A_q=V(G_q)$ is a $t$-treewidth modulator for $G_q$ having at most $ck$ vertices. 
By `lifting' this solution iteratively, we can produce a sequence $A_{q-1},\ldots,A_{2},A_{1}=A$
of  $t$-treewidth modulator for the graphs $G_{q-1},\ldots,G_{1}=G$, each of size at most $ck.$ 
Formally, given a $t$-treewidth modulator $A'$ of $V(G_{i+1})$ obtained by replacing $(G_i[Y],\partial_{G_i}(Y),-)$ by its representative $(G',B',-)$ in ${\cal R}$, 
a $t$-treewidth modulator of $G_i$ can be constructed by taking  
$A'\setminus  V(G_{i+1})\cup L_i$, where $L_i$ is a vertex subset of $Y$ such that $(G_i[Y],\partial_{G_i}(Y),L_i)\equiv_{\phi_t,d} (G',B',A'\cap V(G')).$ 
Since $(G',B',-)$ is a b-graph with $|B'|\leq d$ and $\tw(G'\setminus B')\leq t$, we may assume that $|A'\cap V(G')|\leq d.$  
Accordingly, the $d$-type-equivalence between $(G_i[Y],\partial_{G_i}(Y),-)$ and $(G',B',-)$ ensures the existence of such $L_i.$ 
The actual set $L_i$ can be computed in $O(1)$ steps as the size of $Y$ is bounded by the constant $2b$ and the equivalence $\equiv_{\phi_t,d}$ 
can be tested in $O(1)$ steps as well on the two constant-sized b-structures $(G_i[Y],\partial_{G_i}(Y),L_i)$ and $(G',B',A'\cap V(G')).$ 
Observe that $A'\setminus  V(G_{i+1})\cup L_i$ is a  $t$-treewidth modulator of $G_i$ of size at most $ck$.
\end{proof}

Notice that the above lemma, with worst running time, is also a consequence of the recent results in~\cite{GuptaLLMW18losi}.
We insist to the above statement of Lemma~\ref{exorcize}, as we are interested for a quadratic time approximation
algorithm for protrusion decompositions. Indeed, based on Lemma~\ref{exorcize} we can prove the following that is the main result of this section.

\begin{theorem}
\label{dishonor}
Let $H$ be an $h$-vertex graph and $\phi$ be a MSOL-formula that is treewidth modulable.
Then there is a constant $c$, depending on $h$ and $|\phi|$,  and an algorithm that, 
given an input  $(G,k)$ of $\Pi_{\phi,{\cal F}_{H}}$, either reports no $A\subseteq V(G)$ with $(G,A)\models \phi$ has size at most $k$ or 
outputs a $(ck,c,c)$-protrusion decomposition of $G$ along with tree decompositions of its b-graphs, each of width at most $c.$ This algorithm runs in $O_{|\phi|+h}(n^2)$ steps.
\end{theorem}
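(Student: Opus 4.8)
The plan is to obtain the protrusion decomposition by composing the two algorithms established in this section. Because $\Pi_{\phi,{\cal F}_{H}}$ is treewidth modulable, fix once and for all the constant $t$ (depending only on $\phi$ and $H$) witnessing this property: every non-null instance of $\Pi_{\phi,{\cal F}_{H}}$ has a $t$-treewidth modulator of size at most $t$ times its parameter. First I would run the approximation algorithm of Lemma~\ref{exorcize} for this value of $t$; either it detects that $G$ has no small $t$-treewidth modulator, which I will turn into a proof that $(G,k)$ has no solution of size at most $k$, or it produces a $t$-treewidth modulator of size $O_{t}(k)$, which I will then feed into Lemma~\ref{workweek}.

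Concretely, I would invoke the algorithm of Lemma~\ref{exorcize} (applicable since $G\in{\cal F}_{H}$) on the pair $(G,t\cdot k)$; this costs $O_{h+t}(n^2)$ steps. If it reports that $G$ admits no $t$-treewidth modulator of size at most $t\cdot k$, then no $A\subseteq V(G)$ with $(G,A)\models\phi$ can have $|A|\le k$: if such an $A$ existed with $|A|=k'\le k$, then $(G,k')$ would be a non-null instance of $\Pi_{\phi,{\cal F}_{H}}$, whence by treewidth modulability $G$ would have a $t$-treewidth modulator of size at most $t\cdot k'\le t\cdot k$ --- a contradiction. So in this case the algorithm reports accordingly and halts. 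Otherwise, Lemma~\ref{exorcize} returns a $t$-treewidth modulator $X$ of $G$ with $|X|\le c_{1}\cdot t\cdot k$, where $c_{1}$ is the constant of Lemma~\ref{exorcize} for this $t$.

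Next I would hand $G$ (which is $H$-topological-minor-free) and $X$ to the algorithm of Lemma~\ref{workweek}; in $O_{h+t}(n)$ additional steps this returns a $(c_{2}|X|,c_{2},t)$-protrusion decomposition of $G$ together with tree decompositions of its b-graphs of width at most $c_{2}$, where $c_{2}$ is the constant of Lemma~\ref{workweek}. Since $|X|\le c_{1}t\cdot k$ and the notion of an $(\alpha,\beta,\gamma)$-protrusion decomposition is monotone in each of $\alpha,\beta,\gamma$, putting $c:=\max\{c_{1}c_{2}t,\ c_{2},\ t\}$ turns this into a $(ck,c,c)$-protrusion decomposition of $G$ with accompanying tree decompositions of width at most $c$. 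Here $c$ is determined by $t$, $c_{1}$ and $c_{2}$, hence by $h$ and $|\phi|$, and the overall running time is $O_{h+t}(n^2)+O_{h+t}(n)=O_{|\phi|+h}(n^2)$, since $t$ is a function of $\phi$ and $H$ only.

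Since both Lemma~\ref{exorcize} and Lemma~\ref{workweek} are already available, there is no genuinely hard step here; the proof is essentially a black-box composition together with constant bookkeeping. The only point that requires a moment's thought is the failure branch: one has to apply the contrapositive of treewidth modulability to \emph{every} $k'\le k$, not just to $k'=k$, in order to conclude that no solution of size at most $k$ survives --- and this is precisely why the parameter passed to Lemma~\ref{exorcize} is $t\cdot k$ rather than $k$.
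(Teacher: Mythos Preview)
Your proposal is correct and follows essentially the same approach as the paper: apply Lemma~\ref{exorcize} with parameter $t\cdot k$ (where $t$ is the treewidth-modulability constant), interpret a negative answer via the contrapositive of treewidth modulability, and otherwise feed the returned modulator into Lemma~\ref{workweek}, absorbing all constants into a single $c$. Your handling of the failure branch (quantifying over all $k'\le k$) and your explicit choice $c=\max\{c_{1}c_{2}t,c_{2},t\}$ are in fact slightly more careful than the paper's own write-up.
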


\begin{proof}[Proof]
Recall that, as $\phi$ is  treewidth modulable, there is a constant $c'$, depending on $|\phi|$, such that 
if there is a set $A$ of size $k$ with $(G,k) \models \phi$, then $G$ has a ${c'}$-treewidth modulator $S$ of size $c'\cdot k.$

The algorithm calls the algorithm of Lemma~\ref{exorcize} for $t=c'$ and 
for $k':=c'\cdot k$ instead of $k.$ Let $c''$ be the constant of Lemma~\ref{exorcize} (depending on $h$ and $c'$ that, in turn, depends on  $|\phi|$).
If it reports that no $c'$-treewidth 
modulator of $G$ exists with size $k'$, then it safely reports that no set $A$ of size $k$ satisfies $(G,A)\models \phi.$
Suppose now that the algorithm of Lemma~\ref{exorcize} returns a  
$c'$-treewidth-modulator of $G$ of size at most $k'':=c'' \cdot k'.$ Then, according to Lemma~\ref{workweek} there is a constant $c'''$, depending on $h$ and $c'$,  
and an algorithm that outputs a $(c'''k'',c''',c')$-protrusion decomposition of $G$ along with the tree decompositions of its b-graphs.
As the overall running time of the algorithm is dominated by the one of Lemma~\ref{exorcize}, the theorem follows if we set $c=c'\cdot c''\cdot c'''.$
\end{proof}

\section{The compactor}
\label{validate}

By Theorem~\ref{dishonor}, we may assume that a $(tk,t,t)$-protrusion decomposition ${\bf G}_{1},\ldots, {\bf G}_{s}$ of $G$, with ${\bf G}_i=(G_i,B_i,-)$, is given for some  $t$. 
For counting the sets $A\subseteq V(G)$ of size at most $k$ with $(G,A)\models \phi$, we view such a set $A$ 
as a union of $A_0\cup A_1\cup \cdots A_s$, where $A_0$ is the subset of $A$ residing in the the center of the  decomposition, 
and $A_i=A\cap V({\bf G}_i)$ for each $i\in [s]$. 
Suppose that $A'_i\subseteq V({\bf G}_i)$ for some $i\in [s]$ satisfies $(G_i,B_i,A_i)\equiv_{\phi,t}(G_i,B_i,A'_i)$ and $|A_i|=|A'_i|$. 
Then, $(A\setminus A_i) \cup A'_i$ has the same size as $|A|$ and we have $(G,A\setminus A_i \cup A'_i)\models \phi$. 
In other words, $A'_i$ and $A_i$ are indistinguishable when seen from outside of ${\bf G}_i$. 

The basic idea of the condenser is to replace all the occurrences of such sets $A'_i$ (include $A_i$ itself) with  $O(1)$-bit information; 
that is, the number of such sets, the size of $|A'_i|$, and the equivalence class containing $(G_i,B_i,A'_i)$.
%
Formally, for the given CMSO-formula $\phi$ and $t\in \Bbb{N}$, 
 we define the function  ${\sf \#sol}_{\phi,t}$ so that for each ${\bf R}\in{\cal R}_{\phi,t}$, ${\bf G}:=(G,B,-)\in\overline{\cal T}^{(t)}$,  we set
 
\begin{eqnarray*}
{\sf \#sol}_{\phi,t}({\bf R},{\bf G},k) & =& |\{A\in {V(G)\choose k}\mid  {\bf R}\equiv_{\phi,t}(G,B,A)\}|.
\end{eqnarray*}

\noindent This function can be fully computed in linear time on a b-graph of bounded treewidth.

\begin{lemma}
\label{supports}
For every CMSO-formula $\phi$ and every $t\in\Bbb{N}$, there exists an algorithm that, given a ${\bf G}\in\overline{\cal T}^{(t)}$
and a tree decomposition of ${\bf G}$ of width at most $t$, outputs ${\sf \#sol}_{\phi,t}({\bf R},{\bf G},k')$ for every $({\bf R},k')\in{\cal R}_{\phi,t}\times[0,k].$ This computation takes  $O_{|\phi|,t}(nk^2)$ steps.
\end{lemma}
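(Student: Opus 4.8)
The plan is to run a Courcelle-style dynamic programming over a \emph{nice} tree decomposition in which each table entry is indexed by an MSOL-type and by a size. First I would convert the given width-$\le t$ tree decomposition of $\mathbf{G}=(G,B,-)$ into a nice tree decomposition $D=(T,\chi,r)$ rooted at $r$ with $\chi(r)=B$, having only leaf nodes (empty bag), introduce-vertex nodes, forget-vertex nodes and join nodes, and with $|V(T)|=O_{t}(n)$; this is a standard linear-time transformation. For a node $x$ let $G_x$ be the subgraph of $G$ induced by the union of the bags in the subtree rooted at $x$, and for $A_x\subseteq V(G_x)$ consider the b-structure $(G_x,\chi(x),A_x)\in{\cal B}^{(t+1)}$ (bags have at most $t+1$ vertices, so I work throughout with $\equiv_{\phi,t+1}$ and its finite set of classes). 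The table $D_x$ will store, for every $\equiv_{\phi,t+1}$-class $\sigma$ whose boundary matches the labeled graph on $\chi(x)$ and every $j\in[0,k]$, the quantity $D_x[\sigma,j]=\left|\{A_x\in{V(G_x)\choose j}\mid (G_x,\chi(x),A_x)\in\sigma\}\right|$. By Proposition~\ref{cockneys} there are only $O_{|\phi|,t}(1)$ relevant classes, so each $D_x$ has $O_{|\phi|,t}(k)$ entries, and the finite ``composition tables'' used below are computable from $\phi$ and $t$.

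Next I would specify the transitions. At a leaf $D_x$ records the single empty partial solution in the trivial class. At an introduce node adding $v$ to the bag of its child $y$, each $A_y$ extends in exactly two ways ($v\notin A_x$ or $v\in A_x$), $v$ is adjacent in $G_x$ only to vertices of $\chi(x)$, and the class of $(G_x,\chi(x),A_x)$ is a fixed function of the class of $(G_y,\chi(y),A_y)$, of whether $v\in A_x$, and of the adjacency of $v$ in the bag; the size index is incremented when $v$ is added. At a forget node dropping $v$ the graph is unchanged, so the new class is a fixed function of the old one (``forgetting'' a boundary vertex) and sizes are unchanged. At a join node $x$ with children $y_1,y_2$ and $\chi(x)=\chi(y_1)=\chi(y_2)$, a partial solution $A_x$ corresponds to pairs $(A_{y_1},A_{y_2})$ agreeing on $\chi(x)$, with $|A_x|=|A_{y_1}|+|A_{y_2}|-b$ where $b=|A_x\cap\chi(x)|$, and the class of $(G_x,\chi(x),A_x)$ is a fixed function of the two children's classes; hence $D_x$ is obtained by a size-convolution of $D_{y_1}$ and $D_{y_2}$ over each compatible pair of classes, with the correction $-b$ on the size (the value $b$ being encoded in the common boundary annotation of the classes). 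That for each of these operations the class of the result is determined by the classes of the parts is exactly the algebraic/automaton core underlying Courcelle's theorem (cf.~\cite{Courcelle90them,BoriePT92auto}); I would invoke it rather than reprove it.

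Finally I would read off the answer and account for the running time. Since $\chi(r)=B$ has at most $t$ vertices, every $(G,B,A)$ lies in ${\cal B}^{(t)}\subseteq{\cal B}^{(t+1)}$, and because $\equiv_{\phi,t+1}$ quantifies over a superset of the gluings used by $\equiv_{\phi,t}$ it refines $\equiv_{\phi,t}$ on such structures; letting $\pi$ be the induced computable map from $\equiv_{\phi,t+1}$-classes to $\equiv_{\phi,t}$-classes, we get for every $\mathbf{R}\in{\cal R}_{\phi,t}$ and $k'\in[0,k]$ that
\[
{\sf \#sol}_{\phi,t}(\mathbf{R},\mathbf{G},k')=\sum_{\sigma:\ \pi(\sigma)=[\mathbf{R}]_{\equiv_{\phi,t}}} D_r[\sigma,k'],
\]
computed in $O_{|\phi|,t}(k)$ time from $D_r$. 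Leaf, introduce and forget nodes cost $O_{|\phi|,t}(k)$ each (a bounded number of class-lookups and one pass over the $k+1$ size values), whereas a join node costs $O_{|\phi|,t}(k^2)$ because for each of the $O_{|\phi|,t}(1)$ compatible triples of classes one convolves two size-vectors of length $k+1$; over the $O_{t}(n)$ nodes this totals $O_{|\phi|,t}(nk^2)$. I expect the main difficulty to be purely expository: keeping the type system coherent across bags of different sizes (which forces the $\equiv_{\phi,t+1}$/project-at-the-root bookkeeping) and getting the size inclusion--exclusion at join nodes exactly right --- the underlying congruence facts are standard.
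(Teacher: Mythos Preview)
Your proposal is correct and follows essentially the same approach as the paper: both do Courcelle-style dynamic programming over a nice tree decomposition rooted at a node with bag $B$, storing for each node a table indexed by the $\equiv_{\phi,\cdot}$-class of the partial b-structure and by the size of the annotated set, with leaf/introduce/forget nodes costing $O_{|\phi|,t}(k)$ and join nodes costing $O_{|\phi|,t}(k^2)$ via size-convolution. The only notable difference is bookkeeping: the paper works directly with ${\cal R}_{\phi,t}$ at every node (glossing over the fact that internal bags may have $t{+}1$ vertices), whereas you explicitly work with $\equiv_{\phi,t+1}$ throughout and project to $\equiv_{\phi,t}$ at the root---your treatment is in fact a bit more careful on this point.
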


The proof of Lemma~\ref{supports} is based on a  dynamic programming procedure. 
This may follow implicitly from the proofs of Courcelle's theorem (see~\cite{CourcelleM93mona,CourcelleMR00fixe}). However, we could not find explicit 
statement of it, we present it  for completeness.

\begin{proof}[Proof]
We may assume  that the  tree decomposition $D=(T,\chi,r)$ of ${\bf G}=(G,B,-)$ has the following properties.
\begin{itemize}
\item If $x\in V(T)$ has two children $x_{1}$, $x_{2}$, then $\chi(x)=\chi(x_1)=\chi(x_2).$
\item if $x$ has one child $y$, then the symmetric difference of $\chi(x)$ and $\chi(y)$ contains exactly
one vertex.
\item there is no vertex in $T$ with more than 2 vertices.
\end{itemize}
The above is a so-called {\sl nice tree decomposition} where the boundary $B$ is the root node (see~\cite{BodlaenderK96effi}).

Given a $q\in V(T)$ we denote its set of descendants in $T$, rooted on $r$, including $q$, by $\desc_{T}(q).$ 
For each $q\in V(T)$,
we set $T_{q}=T[\desc_{T}(q)]$ and  we denote by ${\bf G}_{q}$ the b-structure   $(G_{q},B_{q},-)$
where 
$$G_{q}=G[\bigcup_{q'\in V(T_{q})}\chi(q')] \mbox{~and~}B_{q}=\chi(q), 
$$
Notice that if $b\in\desc_{T}(a)$  then $G_{b}$ is a subgraph of $G_{a}.$
Finally,  for every $q\in V(T)$, we  set $\chi_{q}=\chi|_{V_{q}}$, and observe that $D_{q}=(T_{q},\chi_{q},q)$  is a tree decomposition of the b-structure  ${\bf G}_{q}.$

For each node $x$ of $T$, the algorithm will compute the value ${\sf \#sol}_{\phi,t}({\bf R},{\bf G}_{x},k')$  
for every $({\bf R},k') \in {\cal R}_{\phi,t}\times[0,k]$, provided the corresponding values for the children of $x$. 
We distinguish the following cases:

%

\noindent{{\em Case 1:} $x$ is a leaf.} We assume the values of ${\sf \#sol}_{\phi,t}$ are initially set to zero. 
For every $A\subseteq B_x$, we identify the unique $\rep(G_x,B_x,A)$ and 
increase the value of ${\sf \#sol}_{\phi,t}(\rep(G_x,B_x,A),{\bf G}_{x},|A|)$ by 1. Clearly, this correctly computes the function ${\sf \#sol}_{\phi,t}$ at $x.$ 
The b-structure $\rep(G_x,B_x,A)$ can be found by testing the equivalence of $(G_x,B_x,A)$ and each member of ${\bf R}\in {\cal R}_{\phi,t}$ compatible with $(G_x,B_x,A).$ 
Note that the equivalence testing can be done by considering every member ${\bf F}\in {\cal R}_{\phi,t}$ compatible with  ${\bf R}$ and 
see if ${\bf F}\oplus {\bf R}\equiv_{\phi,t} {\bf F}\oplus (G_x,B_x,A).$ 
Since the number of elements in ${\cal R}_{\phi,t}$ is at most $\xi(|\phi|,t)$ by Proposition~\ref{cockneys} and the size of $G_x$ is at most $t$, 
$\rep(G_x,B_x,A)$ can be identified in $O_{t,h}(1)$ steps.

\noindent{{\em Case 2:} $x$ has one child $y$ and  $\{v\}=\chi(x)\setminus \chi(y)$.}
Given a ${\bf R}\in{\cal R}_{\phi,t}$ where  the boundary size of ${\bf R}$ is 
the same as the boundary size of ${\bf G}_{x}$,  
 we  set 
${\bf H}_{\bf R}=(G_x[B_x],B_{y},A_{\bf R})$, where $A_{\bf R}$ is 
the annotated boundary vertices of ${\bf G}_{x}$ that have the same indices as the annotated boundary vertices of ${\bf R}$.
For every such ${\bf R}\in{\cal R}_{\phi,t}$, we define 
$$\frak{P}_{x}({\bf R})=\{{\bf R}'\in {\cal R}_{\phi,t} \mid {\bf R}'\sim {\bf H}_{\bf R} \mbox{~and~}(G({\bf R}'\oplus {\bf H}_{\bf R}),B_{x},A({\bf R}'\oplus {\bf H}_{\bf R}))\equiv_{\phi,r}{\bf R}\}.$$
We also define $b_{\bf R}$ to be 1 or 0 depending on whether 
the vertex in the boundary of ${\bf R}$, that has the same index as $v$, is an annotated vertex of ${\bf R}$ or not.
Observe that
$$\#{\sf sol}_{{\phi},t}({\bf R},{\bf G}_{x},k')=\! \!\sum_{{\bf R}'\in\frak{P}_{x}({\bf R})}\!\!\#{\sf sol}_{{\phi},t}({\bf R}',{\bf G}_{y},k'-b_{\bf R}),$$
therefore we can compute the values of all  ${\sf \#sol}_{\phi,t}({\bf R},{\bf G}_{x},k')$, 
given the values of all  ${\sf \#sol}_{\phi,t}({\bf R},{\bf G}_{y},k')$, in  $O_{|\phi|+t}(k)$ steps.

\noindent{{\em Case 3:} $x$ has one child $y$ and  $\{v\}=\chi(y)\setminus \chi(x)$.} 
Given an ${\bf R}'\in {\cal R}_{\phi,t}$ where $|B({\bf R}')|=|B_{y}|$,
we define $v_{{\bf R}'}$ as the vertex of the underlying graph of ${\bf R}'$ 
that has the same index as the vertex $v$ in $G_{y}$.
For every such ${\bf R}\in{\cal R}_{\phi,t}$ 
we define 
$$\frak{P}_{x}({\bf R})=\{{\bf R}'\in {\cal R}_{\phi,t} \mid \mbox{$|B({\bf R}')|=|B_{y}|$~and~}(G({\bf R}'),B({\bf R}')\setminus \{v_{{\bf R}'}\},A({\bf R}'))\equiv_{\phi,r}{\bf R}\}.$$
Observe that
$$\#{\sf sol}_{{\phi},t}({\bf R},{\bf G}_{x},k')=\! \!\sum_{{\bf R}'\in\frak{P}_{x}({\bf R})}\!\!\#{\sf sol}_{{\phi},t}({\bf R}',{\bf G}_{y},k'),$$
therefore we can compute the values of all  ${\sf \#sol}_{\phi,t}({\bf R},{\bf G}_{x},k')$, 
given the values of all  ${\sf \#sol}_{\phi,t}({\bf R},{\bf G}_{y},k')$, in  $O_{|\phi|+t}(k)$ steps.

\noindent{{\em Case 4:} $x$ has two children $x_{1}$ and $x_{2}$}.  
For every ${\bf R}\in{\cal R}_{\phi,t}$ we define  $r({\bf R})=|B({\bf R})\cap A({\bf R})|$. We set
$$\frak{P}({\bf R})\!=\!\{({\bf R}_{1},{\bf R}_{2})\in {\cal R}_{\phi,t}\times{\cal R}_{\phi,t}\! \mid\! {\bf R}_{1}\sim{\bf R}_{2}~\wedge~ (G({\bf R}_{1}\oplus {\bf R}_{2}),B({\bf R}),A({\bf R}_{1}\oplus {\bf R}_{2}))\equiv_{\phi,r}{\bf R}\}$$
and observe that
$$\#{\sf sol}_{{\phi},t}({\bf R},{\bf G}_{x},k')=\! \!\sum_{(k_{1},k_{2})\in\Bbb{N}^{2}:\atop k_1+k_2=k'+r({\bf R})}\sum_{({\bf R}_{1},{\bf R}_{2})\in\frak{P}({\bf R})}\!\!\#{\sf sol}_{{\phi},t}({\bf R}_1,{\bf G}_{x_1},k_1) \cdot \#{\sf sol}_{{\phi},t}({\bf R}_2,{\bf G}_{x_2},k_2)$$
therefore we can compute the values of all  ${\sf \#sol}_{\phi,t}({\bf R},{\bf G}_{x},k')$, 
given the values of all  ${\sf \#sol}_{\phi,t}({\bf R},{\bf G}_{x_1},k')$
and  the values of all  ${\sf \#sol}_{\phi,t}({\bf R},{\bf G}_{x_2},k')$, in $O_{|\phi|+t}(k^2)$ steps.
\medskip

As the running time of the last case dominate the other two, we conclude 
that the above dynamic programming algorithm
can compute the value of ${\sf \#sol}_{\phi,t}({\bf R},{\bf G},k')$ for every $({\bf R},k')\in{\cal R}_{\phi,t}\times[0,k]$ in   $O_{|\phi|,t}(nk^2)$ steps.\end{proof}

We are now in position to prove Theorem~\ref{hobhouse}.

\begin{proof}[Proof of Theorem~\ref{hobhouse}]
We describe a polynomial size compactor $({\sf P},{\sf M})$ for $\Pi_{\phi,{\cal F}_{H}}.$ 
Given an input $(G,k)\in {\cal F}_{H}\times\Bbb{N}$, 
the condenser ${\sf P}$ of the compactor
runs as a first step the algorithm of Theorem~\ref{dishonor}.
If this algorithm 
reports that there is no set $A$ of size $k$ with $(G,k) \models \phi$,
the the condenser outputs ${\tt \$}$, i.e., ${\sf A}(G,k)={\tt \$}.$ Suppose now that the  output is  
a  $(tk,t,t)$-protrusion decomposition ${\bf G}_{1},\ldots, {\bf G}_{s}$ of $G$, along with the corresponding tree decompositions, for some constant $t$ that depends only on $h$ and $|\phi|.$
Let $K$ be the center of this protrusion decomposition and recall that $|K|,s\leq tk.$
We set $G_0=G[K]$ and let ${\bf G}_i=(G_i,B_i,-)$ for each $i\in [s].$
We also define ${\cal B}=\{B_{i},\mid i\in [s]\}$
where $B_{i}$ is the boundary of ${\bf G}_{i}$, $i\in[s].$ 
The next 
step of the condenser 
is to apply the algorithm of Lemma~\ref{supports}
and compute ${\sf \#sol}_{\phi,t}({\bf R},{\bf G}_i,k')$ for every $({\bf R},k',i)\in{\cal R}_{\phi,t}\times[0,k]\times[s]$,
in $O_{|\phi|+h}(nk^2)$ steps. 
The output of the condenser ${\sf P}$ is

$${\sf P}(G,k)=(G_0,{\cal B},\{{\sf \#sol}_{\phi,t}({\bf R},{\bf G}_i,k')\mid ({\bf R},k',i)\in{\cal R}_{\phi,t}\times[0,k]\times[s]\}).$$

\noindent Clearly, ${\sf P}(G,k)$ can be encoded in $O_{|\phi|+h}(k^{2})$ memory positions. 

We next describe the extractor {\sf M} of the compactor. 
 For simplicity, we write  $z:={\sf P}(G,k)$ and we define ${\sf M}({\tt \$})=0.$
We assume that there is a fixed labeling $\lambda$ of $G_0.$
The extractor ${\sf M}$ first computes the set ${\cal A}$ containing all subsets of $K$ of at most $k$ vertices. Notice that $|{\cal A}|=2^{O_{|\phi|+h}(k)}.$ Next, for each $A_0\in {\cal A}$, the algorithm builds 
the set ${\cal M}_{A_0}$ containing all mappings $\frak{m}: [s]\to {\cal R}_{\phi,t}$
with the property that, for every $i\in[s]$,
$(G_0,B_i,A_0)\sim \frak{m}(i).$
As the boundary of $\frak{m}(i)$ 
induces an identical labeled graph as $B_i$ does, we  denote $\frak{m}(i)$ as $(G_i^{\frak m}, B_i, A_i^{\frak m})$.
Notice that  $|{\cal M}_{A_0}|=2^{O_{|\phi|+h}(k)}$, for every $A_0\in{\cal A}.$

Let $A_0\in{\cal A}$ and $\frak{m}\in{\cal M}_{A_0}.$
For each such pair, the extractor runs a routine that 
constructs an annotated graph  $(D^{\frak{m}},A^{\frak{m}})$
as follows: first it initializes ${\bf D}^{\frak{m}}_{0}=(D_0,A^{\frak m}_0)$ with $D_0=G_0$ and $A^{\frak m}_0=A_0$. After constructing ${\bf D}^{\frak m}_i=(D_i,\bigcup_{j\in [i]}A^{\frak m}_j)$, 
the routine sets ${\bf D}^{m}_{i+1}= ( (D_i,B_{i+1},-)\oplus (G^{\frak m}_{i+1},B_{i+1},-), \bigcup_{j\in [i+1]}A^{\frak m}_j )$ iteratively from $i=0$ up to $s-1$. 
We set $(D^{\frak{m}},A^{\frak{m}})={\bf D}^{\frak{m}}_{s}.$ 
Notice that the routine runs in $O_{|\phi|+h}(k)$ steps and that $|D^{\frak{m}}|=O_{|\phi|+h}(k).$

 

The  extractor ${\sf M}$ is defined as
\begin{eqnarray*}
{\sf M}(z)\!=\!\!\!\!\sum_{A_0\in {\cal A}}\sum_{\frak{m}\in {\cal M}_{A_0}}\!\!\![(D^{\frak{m}},A^{\frak{m}})\models\phi]\cdot 
\Big(\!\!\sum_{\zeta\in {\cal K}_{k-|A_0|}}\prod_{i\in [s]}\#{\sf sol}_{{\phi},t}(\frak{m}(i),{\bf G}_{i},\zeta(i)+|B_{i}\cap A_0|) \Big)\!\!
\label{repealed}\end{eqnarray*}

\noindent where $[\cdot ]$ is a function indicating whether a sentence is true (=1) or false (=0), and 
${\cal K}_{\ell-|A_0|}$ is the set of all vectors $\zeta \in [0,k]^s$ such that $\sum_{i\in [s]} \zeta(i)=\ell-|A_0|.$

Having access to $\{{\sf \#sol}_{\phi,t}({\bf R},{\bf G}_i,k')\mid ({\bf R},k',i)\in{\cal R}_{\phi,t}\times[0,k]\times[s]\}$, we can 
compute ${\sf M}(z)$ in $2^{O_{|\phi|+h}(k)}$ steps.
Therefore, the extractor runs in the claimed running time.
It  remains to prove that ${\sf M}(z)$ equals  $|\{A\in{V(G)\choose k}\mid  (G,A)\models \phi\}|$. 

Before proceeding, we present a key claim.

\begin{claim}\label{egoistic}
Let ${\bf H}_i=(H_i,B,A_i)$ for $i=1,2$ be two compatible b-structures from ${\cal B}^{(t)}$. 
Let ${\bf H}'_2=(H'_2,B,A'_2)$ be a b-structure equivalent with ${\bf H}_2$. 
Then for every $B'\subseteq V(H_1)$ of size at most $t$, the two b-structures ${\bf D}$ and ${\bf D}'$ are equivalent under $\equiv_{\phi,t}$, where

\[
{\bf D}=((H_1,B,-)\oplus (H_2,B,-),B', A_1\cup A_2) \quad \text{and} \quad
{\bf D}'=((H_1,B,-)\oplus (H'_2,B,-),B', A_1\cup A'_2) 
\]
\end{claim}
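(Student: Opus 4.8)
The plan is to prove \autoref{egoistic} by a straightforward "gluing is associative with respect to the equivalence" argument, reducing everything to the defining property of $\equiv_{\phi,t}$. First I would record the two facts I am allowed to use: (1) ${\bf H}_2\equiv_{\phi,t}{\bf H}'_2$ means that $A_2\cap B$ and $A'_2\cap B$ have the same index, that $H_2[B]$ and $H'_2[B]$ are identical as index-labeled graphs (so in particular ${\bf H}_1\sim{\bf H}'_2$, hence ${\bf D}'$ is well-defined), and that for every ${\bf F}\in{\cal B}^{(t)}$ compatible with ${\bf H}_2$ we have ${\bf F}\oplus{\bf H}_2\models\phi\iff{\bf F}\oplus{\bf H}'_2\models\phi$; (2) the associativity of the gluing operation, namely that $(H_1,B,-)\oplus\big((H_2,B,A_2)\big)$ combined with an external boundaried structure along $B'$ can be reparenthesized so that ${\bf H}_2$ (resp.\ ${\bf H}'_2$) appears as the second argument of a single $\oplus$.

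The key steps, in order. Step one: show ${\bf D}\sim{\bf D}'$. This is immediate, since the underlying graph and the annotated set of ${\bf D}$ and ${\bf D}'$ differ only inside the "$H_2$-side", which is disjoint from $B'\subseteq V(H_1)$ except possibly on $B$; but $B\cap B'$ and the index data on it are governed by $H_1$ and by the shared labeled graph $H_2[B]=H'_2[B]$, and by $A_2\cap B=A'_2\cap B$ (as index sets), so the labeled graph on $B'$ and the index of $A\cap B'$ agree. Step two: fix an arbitrary ${\bf F}\in{\cal B}^{(t)}$ with ${\bf F}\sim{\bf D}$; I must show ${\bf F}\oplus{\bf D}\models\phi\iff{\bf F}\oplus{\bf D}'\models\phi$. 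The point is that ${\bf F}\oplus{\bf D}$ is obtained by gluing ${\bf F}$ onto ${\bf D}$ along $B'$, where ${\bf D}$ itself is $(H_1,B,-)\oplus{\bf H}_2$; since $B'\subseteq V(H_1)$, the two gluings happen along disjoint interfaces ($B'$ inside $H_1$, and $B$ between $H_1$ and $H_2$), so by associativity/commutativity of disjoint amalgamation I can rewrite $${\bf F}\oplus{\bf D}=\big((\,{\bf F}\oplus(H_1,B',-)\text{-part}\,),B,A\big)\oplus{\bf H}_2,$$ i.e.\ there is a boundaried structure $\widehat{\bf F}=(\widehat F,B,\widehat A)\in{\cal B}^{(t)}$, built from ${\bf F}$ and $(H_1,B,-)$ and depending only on the $H_1$-side data (not on ${\bf H}_2$ versus ${\bf H}'_2$), such that ${\bf F}\oplus{\bf D}\models\phi\iff\widehat{\bf F}\oplus{\bf H}_2\models\phi$ and ${\bf F}\oplus{\bf D}'\models\phi\iff\widehat{\bf F}\oplus{\bf H}'_2\models\phi$. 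Step three: note $\widehat{\bf F}\sim{\bf H}_2$ (same boundary $B$, compatible labels, and the index condition holds because it did for ${\bf F}$ relative to ${\bf D}$ and the $B$-data is shared), so apply the definition of ${\bf H}_2\equiv_{\phi,t}{\bf H}'_2$ to $\widehat{\bf F}$ to get $\widehat{\bf F}\oplus{\bf H}_2\models\phi\iff\widehat{\bf F}\oplus{\bf H}'_2\models\phi$. Chaining the three biconditionals closes the argument.

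The main obstacle I expect is purely bookkeeping rather than conceptual: making the "reparenthesization" of the two amalgamations rigorous, in particular specifying $\widehat{\bf F}$ precisely (its underlying graph is the result of gluing $G({\bf F})$ and $H_1$ along $B'$, keeping $B\subseteq V(H_1)$ as the new boundary, with $\widehat A=A({\bf F})\cup A_1$ modulo the identifications along $B'$), and verifying that its boundary still has size at most $t$ — which holds because $|B|\le t$ by ${\bf H}_i\in{\cal B}^{(t)}$. One must also be a little careful with the label-prevalence convention ("labels of the first argument prevail"): since in the excerpt the boundary of the first operand survives, one should set up the parenthesizations so that $B'$ and $B$ are always introduced consistently; this is a matter of choosing the order of operands, not a real difficulty. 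Once $\widehat{\bf F}$ is pinned down, Steps two and three are one-line applications of associativity of disjoint union and of the definition of $\equiv_{\phi,t}$ respectively, and the compatibility checks in Step one are routine index/label chases.
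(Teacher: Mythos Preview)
Your proposal is correct and follows essentially the same route as the paper: both arguments first verify ${\bf D}\sim{\bf D}'$, then take an arbitrary ${\bf F}\in{\cal B}^{(t)}$ compatible with ${\bf D}$ and reparenthesize ${\bf F}\oplus{\bf D}$ as $\widehat{\bf F}\oplus{\bf H}_2$ with $\widehat{\bf F}=((F,B',-)\oplus(H_1,B',-),B,A({\bf F})\cup A_1)$, so that the equivalence ${\bf H}_2\equiv_{\phi,t}{\bf H}'_2$ applies directly. Your bookkeeping about labels, compatibility, and the size bound $|B|\le t$ is more explicit than the paper's presentation, but the underlying idea is identical.
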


\noindent{\em Proof of claim}: The compatibility between ${\bf D}$ and ${\bf D}'$ follows immediately from 
the fact $V(H_1)\cap (A_1\cup A_2)=V(H_1)\cap (A_1\cup A'_2)$, which is in turn implied by 
the equivalence (subsuming the compatibility) between ${\bf H}_2$ and ${\bf H}'_2$. 

Let ${\bf F}=(F,B',A)$ be an arbitrary b-structure of ${\cal B}^{(t)}$. Observe that the following annotated structures are identical:

\begin{align*}
{\bf F}\oplus {\bf D}&= (F,B',A)\oplus ((H_1,B,-)\oplus (H_2,B,-),B', A_1\cup A_2) \\
				&=((F,B',-)\oplus (H_1,B',-),B,A\cup A_1)\oplus (H_2,B,A_2)\\
				&=  ((F,B',-)\oplus (H_1,B',-),B,A\cup A_1)\oplus (H'_2,B,A'_2)=  {\bf F}\oplus {\bf D}'
\end{align*}

which implies ${\bf F}\oplus {\bf D}\models \phi$ if and only if ${\bf F}\oplus {\bf D}'\models \phi$.
\qed
\medskip

Let $A\subseteq V(G)$ be a set of size precisely $k$ with $(G,A)\models \phi.$ 
To see that $A$ contributes to ${\sf M}(z)$, we note that $A$ can be uniquely represented as 
the disjoint union $\bigcup_{i\in [0,s]} A_i$, where $A_0:=A\cap K$ and $A_i:=A\cap V(G_i)\setminus B_i$ for $i\in [s].$ 
Consider the mapping $\frak{m}:[s]\to {\cal R}_{\phi,t}$ defined as $\frak{m}(i)=\rep(G_i,B_i,A_i\cup (A_0\cap B_i) ).$ 
Clearly, $\rep(G_i,B_i,A_i\cup (A_0\cap B_i) )$ is compatible with $(G_0,B_i,A_0).$ Moreover, it is not difficult to 
see that Claim~\ref{egoistic} and the construction of $(D^{\frak{m}}, A^{\frak{m}})$ implies $[(D^{\frak m}, A^{\frak m})\models \phi]=[(G,A)\models \phi]$, and 
the vector $\zeta=(|A_1|,\ldots , |A_s|)$ is contained in ${\cal K}_{k-|A_0|}.$ 
Lastly, from $\frak{m}(i)=\rep(G_i,B_i,A_i\cup (A_0\cap B_i) )$ and $\zeta(i)+|B_{i}\cap A_0|=|A_i\cup (A_0\cap B_i)|$, 
the set $A_i\cup (A_0\cap B_i)$ contributes to $\#{\sf sol}_{{\phi},t}(\frak{m}(i),{\bf G}_{i},\zeta(i)+|A_0\cap B_{i}|)$ by 1 for each $i\in [s].$ 
Therefore, we know that $A$ contributes to the sum ${\sf M}(z)$ by 1. 
Furthermore, it is easy to see that distinct sets $A,A'\in {V(G) \choose k}$ with $(G,A),(G,A')\models \phi$ 
yield distinct contributions to ${\sf M}(z)$. 
That is, $|\{A\in{V(G)\choose k}\mid  (G,A)\models \phi\}|$ is at most the value of ${\sf M}(z)$. 

Conversely, consider a set $A_0\subseteq K$ of size at most $k$, a mapping $\frak{m}\in {\cal M}_{A_0}$ such that $[(D^{\frak{m}}, A^{\frak{m}})\models\phi ]=1$, 
and a vector $\zeta \in {\cal K}_{k-|A_0|}.$ 
We may assume $|A_0|\leq k$, since otherwise ${\cal K}_{k-|A_0|}=\emptyset.$ 
Note that for any $A'_i\subseteq V(G_i)$, the b-structure $(G_i,B_i,A'_i)$ is equivalent to $\frak{m}(i)$ under $\equiv_{\phi,t}$ only if 
they are compatible; this implies that  $B_i\cap A'_i = B_i\cap A_0.$ 
Therefore, any set $A'_i$ counted in $\#{\sf sol}_{{\phi},t}(\frak{m}(i),{\bf G}_{i},\zeta(i)+|B_{i}\cap A_0|)$ is of the form $(B_i\cap A_0) \dot{\cup} A_i$, where 
$A_i$ is a vertex subset of $V(G_i)\setminus B_i.$ Furthermore, such a set $A'_i$ satisfies $|A'_i|= \zeta(i)+|B_{i}\cap A_0|$ and thus we have $|A_i|=\zeta(i).$ 

Now, consider an arbitrary sequence $A'_1,\ldots , A'_s$ of vertex sets with $A'_i\subseteq V(G_i)$, each of which is counted in 
$\#{\sf sol}_{{\phi},t}(\frak{m}(i),{\bf G}_{i},\zeta(i)+|B_{i}\cap A_0|).$ Claim~\ref{egoistic}, $[(G^{\frak{m}},A^{\frak{m}})\models\phi ]=1$, and $\frak{m}_i\equiv_{\phi,t} (G_i,B_i, A')$ 
ensure that $(G,A_0\cup \bigcup_{i\in [s]}A'_i)\models \phi.$ Observe  that
$$|A_0\cup \bigcup_{i\in [s]}A'_i|= |A_0| + \sum_{i\in [s]} |A'_i\setminus B_i|=|A_0| + \sum_{i\in [s]} |A_i|=|A_0|+\sum_{i\in [s]} \zeta(i)=k.$$
That is, each combination of $A_0$, $\frak{m}$, $\zeta$, and a sequence $A'_1,\ldots , A'_s$ contributing 1 to the sum ${\sf M}(z)$, 
a vertex set $A$ of size precisely $k$ can be uniquely defined and we have $(G,A)\models \phi$. Clearly, distinct combinations lead to distinct such sets. 
Therefore, $|\{A\in{V(G)\choose k}\mid  (G,A)\models \phi\}|$ is at least the value of ${\sf M}(z)$. This completes the  proof. 
\end{proof}

\section{Conclusions}

Concerning Theorem~\ref{hobhouse},  we stress that the treewidth-modulability 
condition can be derived by other meta-algorithmic conditions. Such conditions are minor/contraction bidimensionality 
and linear separability for graphs excluding a graph/apex graph as a minor~\cite{FominLST10bidi,FominLST16bidi}. This  extends the 
applicability of our meta-algorithmic result to more problems but in more restricted graph classes. 
Natural follow-up questions are  whether the size of the compactor of can be made linear and whether its combinatorial applicability can be extended to more general graph classes.

We envision that the formal definition of a compactor that we give in this paper may 
encourage the research on data-reduction for counting problems. 
The apparent open issue is whether other problems (or families of problems)
may be amenable to this data-reduction paradigm (in particular, the results in~\cite{DiazST08effi,NishimuraRT05,Thurley07kern,ThurleyDipThes}
can be interpreted as results on polynomial compactors).

Another interesting  question is whether (and to which extent) 
the fundamental complexity results in~\cite{ChenFM11lowe,BodlaenderDFH09onpr,FortnowS11infe,Dell16andc,BodlaenderJK14kern,Drucker15newl,HarnikN10onth} on the non-existence of polynomial 
kernels  may have their counterpart for counting problems.

%

\begin{thebibliography}{10}

\bibitem{AbrahamsonF93}
Karl~R. Abrahamson and Michael~R. Fellows.
\newblock Finite automata, bounded treewidth and well-quasiordering.
\newblock In Neil Robertson and Paul~D. Seymour, editors, {\em AMS Summer
  Workshop on Graph Minors, Graph Structure Theory, Contemporary Mathematics
  vol. 147}, pages 539--564. American Mathematical Society, 1993.

\bibitem{ArnborgLS91easy}
Stefan Arnborg, Jens Lagergren, and Detlef Seese.
\newblock Easy problems for tree-decomposable graphs.
\newblock {\em Journal of Algorithms}, 12:308--340, 1991.

\bibitem{Bodlaender96ali}
Hans~L. Bodlaender.
\newblock A linear-time algorithm for finding tree-decompositions of small
  treewidth.
\newblock {\em SIAM J. Comput.}, 25(6):1305--1317, 1996.

\bibitem{BodlaenderDFH09onpr}
Hans~L. Bodlaender, Rodney~G. Downey, Michael~R. Fellows, and Danny Hermelin.
\newblock On problems without polynomial kernels.
\newblock {\em J. Comput. Syst. Sci.}, 75:423--434, December 2009.

\bibitem{BodlaenderFLPST16meta}
Hans~L. Bodlaender, Fedor~V. Fomin, Daniel Lokshtanov, Eelko Penninkx, Saket
  Saurabh, and Dimitrios~M. Thilikos.
\newblock (meta) kernelization.
\newblock {\em J. {ACM}}, 63(5):44:1--44:69, 2016.

\bibitem{BodlaenderJK14kern}
Hans~L. Bodlaender, Bart M.~P. Jansen, and Stefan Kratsch.
\newblock Kernelization lower bounds by cross-composition.
\newblock {\em {SIAM} J. Discrete Math.}, 28(1):277--305, 2014.

\bibitem{BodlaenderK96effi}
Hans~L. Bodlaender and Ton Kloks.
\newblock Efficient and constructive algorithms for the pathwidth and treewidth
  of graphs.
\newblock {\em J. Algorithms}, 21(2):358--402, 1996.

\bibitem{BodlaendervA01a}
Hans~L. Bodlaender and Babette van Antwerpen-de Fluiter.
\newblock Reduction algorithms for graphs of small treewidth.
\newblock 167:86--119, 2001.

\bibitem{BoriePT92auto}
Richard~B. Borie, R.~Gary Parker, and Craig~A. Tovey.
\newblock Automatic generation of linear-time algorithms from predicate
  calculus descriptions of problems on recursively constructed graph families.
\newblock {\em Algorithmica}, 7:555--581, 1992.

\bibitem{ChenFM11lowe}
Yijia Chen, J{\"{o}}rg Flum, and Moritz M{\"{u}}ller.
\newblock Lower bounds for kernelizations and other preprocessing procedures.
\newblock {\em Theory Comput. Syst.}, 48(4):803--839, 2011.

\bibitem{Courcelle90them}
Bruno Courcelle.
\newblock The monadic second-order logic of graphs. {I}. {R}ecognizable sets of
  finite graphs.
\newblock {\em Information and Computation}, 85(1):12--75, 1990.

\bibitem{CourcelleMR00fixe}
Bruno Courcelle, J.A. Makowsky, and U.~Rotics.
\newblock On the fixed parameter complexity of graph enumeration problems
  definable in monadic second-order logic.
\newblock {\em Discrete Applied Mathematics}, 108(1):23 -- 52, 2001.
\newblock Workshop on Graph Theoretic Concepts in Computer Science.

\bibitem{CourcelleM93mona}
Bruno Courcelle and M.~Mosbah.
\newblock Monadic second-order evaluations on tree-decomposable graphs.
\newblock {\em Theoretical Computer Science}, 109:49--82, 1993.

\bibitem{CyganFKLMPPS15Para}
Marek Cygan, Fedor~V. Fomin, Lukasz Kowalik, Daniel Lokshtanov, D{\'{a}}niel
  Marx, Marcin Pilipczuk, Michal Pilipczuk, and Saket Saurabh.
\newblock {\em Parameterized Algorithms}.
\newblock Springer, 2015.

\bibitem{DawarGK07loca}
Anuj Dawar, Martin Grohe, and Stephan Kreutzer.
\newblock Locally excluding a minor.
\newblock In {\em 21st IEEE Symposium on Logic in Computer Science (LICS'07)},
  pages 270--279. IEEE, New York, 2007.

\bibitem{Fluiter97}
Babette de~Fluiter.
\newblock {\em Algorithms for Graphs of Small Treewidth}.
\newblock PhD thesis, Dept. Computer Science, Utrecht University, 1997.

\bibitem{Dell16andc}
Holger Dell.
\newblock And-compression of np-complete problems: Streamlined proof and minor
  observations.
\newblock {\em Algorithmica}, 75(2):403--423, 2016.

\bibitem{DiazST08effi}
Josep D{\'\i}az, Maria Serna, and Dimitrios~M. Thilikos.
\newblock Efficient algorithms for counting parameterized list {$H$}-colorings.
\newblock {\em J. Comput. System Sci.}, 74(5):919--937, 2008.

\bibitem{DowneyF13fund}
Rodney~G. Downey and Michael~R. Fellows.
\newblock {\em Fundamentals of Parameterized Complexity}.
\newblock Texts in Computer Science. Springer, 2013.

\bibitem{Drucker15newl}
Andrew Drucker.
\newblock New limits to classical and quantum instance compression.
\newblock {\em {SIAM} J. Comput.}, 44(5):1443--1479, 2015.

\bibitem{DvorakKR10deci}
Zdenek Dvorak, Daniel Kral, and Robin Thomas.
\newblock Deciding first-order properties for sparse graphs.
\newblock In {\em 2010 IEEE 51st Annual Symposium on Foundations of Computer
  Science}, FOCS '10, pages 133--142. IEEE, Washington, DC, USA, 2010.

\bibitem{DvorakKT13firs}
Zden\v{e}k Dvo\v{r}\'{a}k, Daniel Kr\'{a}l, and Robin Thomas.
\newblock Testing first-order properties for subclasses of sparse graphs.
\newblock {\em J. ACM}, 60(5):36:1--36:24, October 2013.

\bibitem{FlumG01fixe}
J{\"o}rg Flum and Martin Grohe.
\newblock Fixed-parameter tractability, definability, and model-checking.
\newblock {\em SIAM J. Comput.}, 31(1):113--145, 2001.

\bibitem{FlumGrohebook}
J{\"o}rg Flum and Martin Grohe.
\newblock {\em Parameterized Complexity Theory}.
\newblock Texts in Theoretical Computer Science. An EATCS Series.
  Springer-Verlag, Berlin, 2006.

\bibitem{FominLST10bidi}
F.~V. Fomin, D.~Lokshtanov, S.~Saurabh, and D.~M. Thilikos.
\newblock Bidimensionality and kernels.
\newblock In {\em 21st Annual ACM-SIAM Symposium on Discrete Algorithms (SODA
  2010)}, pages 503--510. ACM-SIAM, 2010.

\bibitem{FominOT10rank}
Fedor~V. Fomin, Sang il~Oum, and Dimitrios~M. Thilikos.
\newblock Rank-width and tree-width of h-minor-free graphs.
\newblock {\em Eur. J. Comb.}, 31(7):1617--1628, 2010.

\bibitem{FominLMS12plan}
Fedor~V. Fomin, Daniel Lokshtanov, Neeldhara Misra, and Saket Saurabh.
\newblock Planar f-deletion: Approximation, kernelization and optimal {FPT}
  algorithms.
\newblock In {\em 53rd Annual {IEEE} Symposium on Foundations of Computer
  Science, {FOCS} 2012, New Brunswick, NJ, USA, October 20-23, 2012}, pages
  470--479, 2012.

\bibitem{FominLST16bidi}
Fedor~V. Fomin, Daniel Lokshtanov, Saket Saurabh, and Dimitrios~M. Thilikos.
\newblock Bidimensionality and kernels.
\newblock {\em CoRR}, abs/1606.05689, 2016.
\newblock Revised version.

\bibitem{FortnowS11infe}
Lance Fortnow and Rahul Santhanam.
\newblock Infeasibility of instance compression and succinct pcps for {NP}.
\newblock {\em J. Comput. Syst. Sci.}, 77(1):91--106, 2011.

\bibitem{Frick04gene}
Markus Frick.
\newblock Generalized model-checking over locally tree-decomposable classes.
\newblock {\em Theory of Computing Systems}, 37(1):157--191, Jan 2004.

\bibitem{FrickG99deci}
Markus Frick and Martin Grohe.
\newblock Deciding first-order properties of locally tree-decomposable graphs.
\newblock In Jir{\'\i} Wiedermann, Peter van Emde~Boas, and Mogens Nielsen,
  editors, {\em Automata, Languages and Programming}, volume 1644 of {\em
  LNCS}, pages 72--72. Springer Berlin / Heidelberg, 1999.

\bibitem{Grohe08logi}
Martin Grohe.
\newblock Logic, graphs, and algorithms.
\newblock In {\em Logic and Automata}, pages 357--422. 2008.

\bibitem{GroheK11meth}
Martin Grohe and Stephan Kreutzer.
\newblock {\em Methods for algorithmic meta theorems}, chapter Model Theoretic
  Methods in Finite Combinatorics, pages 181 -- 206.
\newblock Contemporary Mathematics, 2011.

\bibitem{GroheKS2014deci}
Martin Grohe, Stephan Kreutzer, and Sebastian Siebertz.
\newblock Deciding first-order properties of nowhere dense graphs.
\newblock In {\em Proceedings of the 46th Annual ACM Symposium on Theory of
  Computing}, STOC '14, pages 89--98, New York, NY, USA, 2014. ACM.

\bibitem{GuptaLLMW18losi}
Anupam Gupta, Euiwoong Lee, Jason Li, Pasin Manurangsi, and Michal Wlodarczyk.
\newblock Losing treewidth by separating subsets.
\newblock {\em CoRR}, abs/1804.01366, 2018.

\bibitem{HarnikN10onth}
Danny Harnik and Moni Naor.
\newblock On the compressibility of \emph{NP} instances and cryptographic
  applications.
\newblock {\em {SIAM} J. Comput.}, 39(5):1667--1713, 2010.

\bibitem{KazanaS13enum}
Wojciech Kazana and Luc Segoufin.
\newblock Enumeration of first-order queries on classes of structures with
  bounded expansion.
\newblock In Richard Hull and Wenfei Fan, editors, {\em Proceedings of the 32nd
  {ACM} {SIGMOD-SIGACT-SIGART} Symposium on Principles of Database Systems,
  {PODS} 2013, New York, NY, {USA} - June 22 - 27, 2013}, pages 297--308.
  {ACM}, 2013.

\bibitem{KimLPRRSS16line}
Eun~Jung Kim, Alexander Langer, Christophe Paul, Felix Reidl, Peter Rossmanith,
  Ignasi Sau, and Somnath Sikdar.
\newblock Linear kernels and single-exponential algorithms via protrusion
  decompositions.
\newblock {\em {ACM} Trans. Algorithms}, 12(2):21:1--21:41, 2016.

\bibitem{Kreutzer08algo}
Stephan Kreutzer.
\newblock Algorithmic meta-theorems.
\newblock In {\em IWPEC}, pages 10--12. 2008.

\bibitem{LeeO15numb}
Choongbum Lee and Sang il~Oum.
\newblock Number of cliques in graphs with a forbidden subdivision.
\newblock {\em SIAM Journal on Discrete Mathematics}, 29(4):1999--2005, 2015.

\bibitem{LokshtanovMS12kern}
Daniel Lokshtanov, Neeldhara Misra, and Saket Saurabh.
\newblock The multivariate algorithmic revolution and beyond.
\newblock chapter Kernelization --- Preprocessing with a Guarantee, pages
  129--161. Springer-Verlag, Berlin, Heidelberg, 2012.

\bibitem{Niedermeier06inv}
Rolf Niedermeier.
\newblock {\em Invitation to fixed-parameter algorithms}, volume~31 of {\em
  Oxford Lecture Series in Mathematics and its Applications}.
\newblock Oxford University Press, Oxford, 2006.

\bibitem{NishimuraRT05}
Naomi Nishimura, Prabhakar Ragde, and Dimitrios~M. Thilikos.
\newblock Parameterized counting algorithms for general graph covering
  problems.
\newblock In Frank K. H.~A. Dehne, Alejandro L{\'{o}}pez{-}Ortiz, and
  J{\"{o}}rg{-}R{\"{u}}diger Sack, editors, {\em Algorithms and Data
  Structures, 9th International Workshop, {WADS} 2005, Waterloo, Canada, August
  15-17, 2005, Proceedings}, volume 3608 of {\em Lecture Notes in Computer
  Science}, pages 99--109. Springer, 2005.

\bibitem{NorineSTW06prop}
Serguei Norine, Paul~D. Seymour, Robin Thomas, and Paul Wollan.
\newblock Proper minor-closed families are small.
\newblock {\em J. Comb. Theory, Ser. {B}}, 96(5):754--757, 2006.

\bibitem{ReedW09alin}
Bruce~A. Reed and David~R. Wood.
\newblock A linear-time algorithm to find a separator in a graph excluding a
  minor.
\newblock {\em {ACM} Trans. Algorithms}, 5(4):39:1--39:16, 2009.

\bibitem{Seese91thes}
Detlef Seese.
\newblock The structure of the models of decidable monadic theories of graphs.
\newblock {\em Annals of Pure and Applied Logic}, 53(2):169 -- 195, 1991.

\bibitem{Seese96line}
Detlef Seese.
\newblock Linear time computable problems and first-order descriptions.
\newblock {\em Mathematical Structures in Computer Science}, 6(6):505--526,
  1996.

\bibitem{Thurley07kern}
Marc Thurley.
\newblock Kernelizations for parameterized counting problems.
\newblock In {\em 4th international conference on Theory and applications of
  models of computation}, TAMC'07, pages 705--714. Springer-Verlag, Berlin,
  Heidelberg, 2007.

\bibitem{ThurleyDipThes}
Marc Thurley.
\newblock Tractability and intractability of parameterized counting problems,
  May 2006.
\newblock Diploma thesis.

\end{thebibliography}

 \newcommand{\bibremark}[1]{\marginpar{\tiny\bf#1}}
  \newcommand{\biburl}[1]{\url{#1}}

\end{document}